\documentclass[a4paper,12pt]{article}
\usepackage[a4paper,left=2.5cm, right=2.5cm, top=2.5cm, bottom=2.5cm]{geometry}
\usepackage[utf8]{inputenc}   	
\usepackage[T1,TS1]{fontenc}        

\usepackage{rotating}
\usepackage{threeparttable}
\usepackage{bbm}
\usepackage{lineno}
\usepackage{adjustbox}
\usepackage[section]{placeins}

\usepackage{setspace} 
\usepackage{color}              
\usepackage{amsmath,amsthm,amsfonts,amssymb,amscd,mathrsfs}
\usepackage{bm}						
\usepackage{bbm}	
\usepackage[ngerman, english]{babel}     %
\usepackage{ae}                 %
\usepackage{authblk}
\usepackage{graphicx}           
\usepackage{braket}
\usepackage{longtable}          
\usepackage{multicol}
\usepackage{array}
\usepackage[inline]{enumitem}
\usepackage{multirow}           
\usepackage[hyphens]{url}
\usepackage{pdfpages}
\usepackage{epsfig}
\usepackage{fancybox}
\usepackage[labelfont={bf}, hypcap=false,justification=centerfirst]{caption}
\usepackage{bold-extra}
\usepackage{xcolor}
\usepackage{fancyvrb}
\usepackage{arydshln}
\usepackage{tcolorbox}
\tcbuselibrary{breakable}

\usepackage{lscape}
\usepackage{import}
\usepackage{float}
\usepackage{bigstrut}
\usepackage[titletoc]{appendix}
\usepackage{todonotes}
\usepackage[babel]{csquotes}
\usepackage[babel]{csquotes}
\usepackage{datetime}
\newdateformat{deformat}{\THEDAY{.} \monthname[\THEMONTH] \THEYEAR}
\definecolor{tuebingendarkred}{RGB}{165,30,55}
\usepackage{alphalph}
\usepackage{pifont}
\usepackage{ae}
\usepackage{subcaption}
\usepackage{microtype}
\usepackage{placeins}
\usepackage{bm,xstring}
\usepackage{cleveref}
\usepackage{tabularx}
\usepackage[hang,bottom,stable]{footmisc}
\newcolumntype{d}{D{.}{.}{-1}}
\newcolumntype{s}{>{\hsize=1\hsize}X}

\RecustomVerbatimCommand{\VerbatimInput}{VerbatimInput}%
{fontsize=\footnotesize,
 frame=lines,  
 framesep=2em, 
 rulecolor=\color{Gray},
 label=\fbox{\color{Black}data.txt},
 labelposition=topline,
 commandchars=\|\(\), 
 commentchar=*        
}

\usepackage{amsmath}
\usepackage{amssymb}
\usepackage{amsthm}

\newtheorem{proposition}{Proposition}

\newtheorem{remark}{Remark}

\newcommand{\seq}[1]{\left<#1\right>}

    \def\mb#1%
	{\IfSubStr{ABCDEFGHIJKLMNOPQRSTUVWXYZabcdefghijklmnopqrstuvwxyz}{#1}
	  {\mathbf{#1}}
	  {\bm{#1}}}
    
\DeclareMathOperator*{\Prob}{Prob}		

\def\mb#1%
{\IfSubStr{ABCDEFGHIJKLMNOPQRSTUVWXYZabcdefghijklmnopqrstuvwxyz}{#1}
    {\mathbf{#1}}
    {\bm{#1}}}

\makeatletter
\def\munderbar#1{\underline{\sbox\tw@{$#1$}\dp\tw@\z@\box\tw@}}
\let\@@pmod\pmod
\DeclareRobustCommand{\pmod}{\@ifstar\@pmods\@@pmod}
\def\@pmods#1{\mkern4mu({\operator@font mod}\mkern 6mu#1)}
\makeatother
\makeatletter 
\def\leqn{\tagsleft@true} 
\def\reqn{\tagsleft@false} 
\def\fleq{\@fleqntrue \let\mathindent\@mathmargin \@mathmargin=\leftmargini} 
\def\cneq{\@fleqnfalse} 
\makeatother 

\usepackage{pgf}
\usepackage{tikz}
\usetikzlibrary{arrows,shapes,automata,backgrounds,petri,calc,matrix,decorations.markings,decorations.pathreplacing,patterns}
\usepackage[titletoc]{appendix}

\newcolumntype{C}[1]{>{\centering\arraybackslash}m{#1}}
\newcolumntype{R}[1]{>{\raggedleft\arraybackslash}m{#1}}  

\usepackage{booktabs}
\usepackage{environ}
\usepackage{expl3}
\ExplSyntaxOn
\seq_new:N \l_DT_contents_seq
\seq_new:N \l_DT_contents_A_seq
\tl_new:N \l_DT_header_tl
\cs_generate_variant:Nn \seq_set_split:Nnn { NnV }
\NewEnviron{dbltbl}[2]
  {
    \seq_set_split:NnV \l_DT_contents_seq { \\ } \BODY

    \seq_pop_left:NN \l_DT_contents_seq \l_DT_header_tl

    \seq_clear:N \l_DT_contents_A_seq
    \prg_replicate:nn
      { \int_div_truncate:nn { \seq_count:N \l_DT_contents_seq } {2} }
      {
        \seq_pop:NN \l_DT_contents_seq \l_tmpa_tl
        \seq_put_right:NV \l_DT_contents_A_seq \l_tmpa_tl
      }

    \begin{tabular}{#1m{#2}m{#2}#1}
      \toprule
      \l_DT_header_tl &&& \l_DT_header_tl \\
      \midrule
      \seq_mapthread_function:NNN
        \l_DT_contents_A_seq
        \l_DT_contents_seq
        \DT_one_row:nn
      \bottomrule
    \end{tabular}
  }
\cs_new:Npn \DT_one_row:nn #1#2 { #1 &&& #2 \\ }
\ExplSyntaxOff

\usepackage{quoting}

\floatstyle{plain}
\newfloat{Example}{htbp}{loe}
\floatname{Example}{Example}
\usepackage{natbib}
\date{\today}
\begin{document}

\begin{titlepage}
\singlespace

\title{A Bayes-Factor-Guided Approach to Post-Double Selection with Bootstrapped Multiple Imputation}
 \author[1]{Johannes Bleher}
 \author[2]{Claudia Tarantola}

 \affil[1]{Department of Econometrics and Empirical Economics \& Computational Science Hub, University of Hohenheim}
 \affil[2]{Department of Economics, Management and Quantitative Methods, University of Milan}
\date{\today}

\maketitle
\thispagestyle{empty}

\vspace*{-0.5cm}

\begin{abstract}
\noindent
When variable selection methods are applied to bootstrapped and multiply imputed datasets, the set of selected variables typically varies across iterations. Aggregating results via the union rule can lead to overly dense models. We propose a sequential evidence aggregation procedure that models detection outcomes across perturbation iterations as Bernoulli trials and accumulates evidence for variable relevance through a likelihood-ratio process admitting an approximate Bayes-factor interpretation. The procedure provides both a variable inclusion criterion and a stopping rule that eliminates the need to fix the number of bootstrap -- imputation iterations ex ante. A Monte Carlo study across 126 scenarios and an empirical illustration demonstrate the method's performance relative to existing aggregation approaches.
\end{abstract}
\vspace{0.5cm}

\begin{tabular}{p{0.15\textwidth} p{0.7\textwidth}}
  \textit{Keywords:} &
  Bayes Factor,
  Bootstrap,
  Multiple Imputation,
  Post-Double Selection,
  Sequential Testing,
  Variable Selection \\
  \\
\textit{JEL:} & C11, C15, C21, C52\\
\end{tabular}
\end{titlepage}

\doublespacing

\section{Introduction}

Estimating the effect of a variable of interest on an outcome in the presence of many potential control variables is a common problem in empirical research. When the number of candidate covariates is large, variable selection methods such as LASSO are routinely used to construct parsimonious models. However, data-driven selection can distort estimation if relevant variables are omitted, particularly when they are associated with both the variable of interest and the outcome \citep{Hayashi2000}.

The post-double selection (PDS) approach of \citet{BelloniCH2014b,BelloniCH2014a} addresses this issue by performing variable selection in both the outcome and the variable-of-interest equations and combining the resulting sets of selected variables. This construction increases the likelihood that relevant controls are retained and has become a standard tool for estimation after selection in high-dimensional settings.

In empirical applications, variable selection is often combined with additional procedures that introduce further stochastic variability. Missing covariate data are commonly handled via multiple imputation (MI) \citep{Rubin1987,LittleR2019}, while sampling uncertainty is frequently addressed through bootstrap resampling \citep{Efron1979,EfronT1994}. When selection methods are applied repeatedly across such perturbed datasets, the set of selected variables typically varies across iterations. Standard aggregation rules, such as the union of all selected variables, may therefore lead to overly large models, while frequency-based rules may discard variables with weaker but persistent signals.

Although perfect variable selection is not required for reliable estimation after selection \citep{BelloniCH2014b}, models containing many irrelevant covariates are often undesirable in practice. Moreover, researchers are often interested not only in obtaining stable estimates but also in identifying a meaningful subset of relevant control variables. In the presence of missing data and resampling procedures, assessing variable relevance therefore becomes particularly challenging.

This paper proposes a sequential evidence-based aggregation procedure for variable-selection results generated by repeated bootstrap and multiple-imputation (BOOT-MI) replications that use PDS screening within each iteration. The procedure uses PDS to construct within-iteration candidate sets, but applies an additional detection and aggregation layer across BOOT-MI replications. We implement a sequential BOOT-MI procedure in which each iteration consists of bootstrapping the incomplete dataset, performing stochastic imputation, constructing a PDS candidate set via LASSO in the outcome and variable-of-interest equations, and selectively confirming outcome-selected variables in a final regression step.

Rather than aggregating selected variables using heuristic rules, we construct a likelihood-ratio–based evidence measure with a Bayes-factor interpretation from a working detection model with empirically calibrated null and alternative detection probabilities. We then accumulate evidence for variable relevance sequentially across iterations. The resulting evidence measure provides a decision criterion for variable inclusion and induces a natural stopping rule for the sequential BOOT-MI procedure.

Our contribution is threefold.
First, we cast the aggregation of variable-selection outcomes across BOOT-MI datasets as a sequential decision problem. Specifically, we model binary detection indicators as outcomes of a stochastic detection process with hypothesis-dependent probabilities, leading to a likelihood-ratio-type evidence accumulation process. Second, we derive a stopping rule and a variable inclusion criterion based on this evidence process, eliminating the need to fix the number of perturbation iterations ex ante. Third, we propose a pilot-based calibration strategy for the detection probabilities and examine its performance across a range of simulation scenarios.

The proposed procedure is not intended as a new version of PDS itself, but as a calibrated sequential evidence-aggregation framework built around repeated BOOT-MI implementations that use PDS candidate screening. It combines an asymmetric within-iteration detection rule, pilot-based calibration of detection probabilities, and sequential log-evidence accumulation across perturbations. The resulting method is designed to yield interpretable and stable inclusion decisions in settings where repeated perturbations induce substantial variability in selected models.

The remainder of the paper is structured as follows. Section~\ref{sec:literature} reviews the related literature. Section~\ref{sec:methodology} develops the proposed sequential evidence-aggregation framework. Section~\ref{sec:simulation} reports the Monte Carlo study, and Section~\ref{sec:empirical} provides an empirical illustration. Section~\ref{sec:conclusion} provides concluding remarks and discusses directions for future research.

\section{Related Literature}
\label{sec:literature}

This paper relates to four strands of the literature: variable selection with multiply imputed data, bootstrap and resampling approaches to model selection, Bayesian evidence and Bayes factors, and sequential testing procedures. Each strand addresses a component of the problem considered here, but the combination of missing data, resampling-based variability, and principled aggregation of variable-selection outcomes remains only partially explored.

\subsection{Variable Selection with Multiply Imputed Data}

Variable selection in the presence of missing covariate data has received increasing attention in the statistical literature. MI provides a principled framework for handling incomplete datasets by replacing missing values with draws from their predictive distribution \citep{Rubin1987,WoodWR2008}. By generating multiple completed datasets, MI allows standard complete-data methods to be applied while accounting for imputation uncertainty.

Combining variable selection with MI introduces additional complexity, since selection results may differ across imputed datasets. Several approaches have been proposed to address this issue. \citet{ChenW2013} introduce MI-LASSO, which applies a group LASSO penalty across imputations to encourage consistent selection. Alternative approaches rely on stacked or grouped penalized regression formulations \citep{DuB2020}, while comparative studies evaluate the performance of LASSO-based and Bayesian procedures under missing data \citep{BainterM2023}. Bayesian methods, such as stochastic search variable selection and reversible jump Markov chain Monte Carlo, have also been applied in this setting \citep{GeorgeM1997,OHaraS2009}.

A related framework is PDS \citep{BelloniCH2014b,BelloniCH2014a}, originally developed in a causal inference context. The method performs variable selection in both the outcome equation and an auxiliary equation for the variable of interest and combines the resulting sets of selected variables. While this construction increases the likelihood that relevant controls are retained, it is typically combined with simple aggregation rules, such as the union of selected variables, which may lead to overly large models when applied across multiple imputations or resampled datasets.

Overall, this strand of literature highlights the difficulty of achieving stable and interpretable variable selection in the presence of missing data and provides limited guidance on how to aggregate selection outcomes across multiple stochastic completions of the data.

\subsection{Bootstrap, Resampling, and Stability-Based Variable Selection}

Resampling methods are widely used to assess sampling variability in variable selection procedures. Bootstrap-based approaches generate perturbed versions of the data and apply selection methods repeatedly, thereby providing information on the stability of selected variables. Several studies combine bootstrap resampling with MI to jointly account for sampling and imputation uncertainty \citep{LongJ2015,MusoroTORB2014}.

A closely related approach is stability selection \citep{MeinshausenB2010}, which evaluates variable importance based on selection frequencies across subsamples. Stability selection provides theoretical guarantees on the expected number of false selections under suitable assumptions and can be applied as a wrapper around a wide range of selection methods. However, its aggregation mechanism is based on frequency thresholds, which treat all detections symmetrically and do not account for the sequential structure of the selection process.

Recent work has also explored machine learning approaches to variable selection under missing data and resampling. For example, \citet{GunnR2022} study procedures that combine cross-validation, sample splitting, and MI in predictive modeling settings. These approaches emphasize predictive performance and model stability, but they typically rely on heuristic aggregation rules.

Taken together, resampling-based approaches highlight the intrinsic variability of variable selection under data perturbations. However, they generally rely on aggregation rules such as frequency thresholds or union rules, which may either include too many irrelevant variables or exclude variables with weaker but persistent signals. In contrast, the approach proposed in this paper aggregates selection outcomes through a sequential evidence-accumulation mechanism that accounts for the asymmetric informational content of detections and non-detections.

\subsection{Evidence Process and Bayes Factor Interpretation}

Complementary to resampling-based approaches, the literature has developed evidence-based frameworks for model comparison and variable selection. Bayes factors provide a formal mechanism for quantifying evidence in favor of competing hypotheses or models \citep{KassR1995}. Unlike classical hypothesis testing based on $p$-values, they allow direct comparison between alternative explanations and quantify evidence for both the null and the alternative.

The approach proposed in this paper differs from a fully Bayesian variable selection method. Rather than specifying a complete probabilistic model over parameters and model space, we construct a likelihood-ratio--based evidence measure with a Bayes-factor interpretation based on a working detection model with empirically calibrated parameters. This strategy is related to a broader literature on approximate Bayes factors, including the Schwarz criterion \citep{Schwarz1978} and calibrated Bayes factors that reinterpret frequentist test statistics in evidential terms \citep{Sellke2001}.

Unlike these approaches, which operate on model likelihoods or test statistics, our method is based on binary detection outcomes generated by repeated perturbation and selection. This perspective yields a tractable and interpretable evidence measure tailored to the aggregation of variable-selection results across BOOT-MI datasets.

\subsection{Sequential Testing and Evidence Accumulation}

Sequential hypothesis testing, introduced by \citet{Wald1945}, provides a framework for accumulating evidence through repeated observations and terminating data collection once sufficient evidence has been reached. The Sequential Probability Ratio Test (SPRT) is based on cumulative likelihood ratios and admits both frequentist and Bayesian interpretations. In particular, SPRT decision boundaries can be related to Bayes-factor thresholds under suitable prior specifications \citep{EdwardsLJ1963}.

The approach proposed in this paper draws on this connection by interpreting the cumulative likelihood ratio of the detection model as a sequential evidence measure with a Bayes-factor interpretation. However, unlike classical sequential testing settings, the observations in our framework are not independent draws from a sampling distribution but rather detection outcomes generated by stochastic perturbations of a fixed dataset. The resulting evidence accumulation mechanism is therefore motivated by, but not formally equivalent to, the SPRT framework.

\medskip

Despite these connections, previous works provide limited guidance on how to apply sequential evidence accumulation to the aggregation of variable-selection outcomes across BOOT-MI datasets. The present paper addresses this gap by developing a sequential evidence-based aggregation procedure tailored to this setting.

\section{Methodology}
\label{sec:methodology}

The methodology uses perturbation-based PDS screening as a within-iteration device and adds a detection model together with a sequential evidence-aggregation rule across perturbations.

\subsection{Perturbation-Based Variable Selection Framework}
\label{sec:perturbation}

We consider estimation in a sparse linear model with a large set of potential control variables, some of which may contain missing values. Let $Y$ denote the outcome variable, $D$ the variable of interest, and $X = (X_1,\dots,X_p)$ a vector of candidate covariates. Our parameter of interest is the coefficient $\alpha$ in the partially linear regression model
\begin{equation}
Y = D\alpha + X'\beta + \varepsilon.
\end{equation}

Following PDS \citep{BelloniCH2014b,BelloniCH2014a}, variable selection is based on two sparse regression problems corresponding to the outcome equation and an auxiliary equation for $D$. In particular, we consider LASSO regressions of $Y$ on $(D, X)$ and of $D$ on $X$, that is,
\begin{align}
\text{Outcome equation:} \quad & Y = D\alpha + X'\beta + \varepsilon, \\
\text{Auxiliary equation for } D: \quad & D = X'\gamma + \nu.
\end{align}
The outcome equation identifies variables predictive of $Y$, while the equation for $D$ identifies variables predictive of the variable of interest. In the standard PDS approach, the selected control set within a given dataset is formed by the union rule
\begin{equation}
S_t = S_{Y,t} \cup S_{D,t},
\end{equation}
where $S_{Y,t}$ and $S_{D,t}$ denote the sets of variables selected in the outcome equation and the auxiliary equation for $D$, respectively, in perturbation iteration $t$.

In each perturbation iteration, we generate a single completed dataset via stochastic imputation. Within the perturbation stage, each bootstrap draw is completed by one stochastic imputation and then passed to the selection/detection step. Thus, the repeated perturbation loop does not pool over multiple imputations within a given iteration. Instead, imputation uncertainty enters the selection stage through repeated stochastic completions across perturbations. Classical multiple-imputation pooling is used only after the final variable set has been determined, when the parameter of interest is re-estimated across $M$ completed datasets and combined using Rubin's rules \citep{Rubin1987}. Specifically, if $Q^{(m)}$ denotes the estimate of the parameter of interest and $U^{(m)}$ its complete-data variance estimate in imputed dataset $m=1,\dots,M$, the pooled point estimate is
\[
\bar{Q} = \frac{1}{M}\sum_{m=1}^M Q^{(m)}.
\]
The average within-imputation variance is
\[
\bar{U} = \frac{1}{M}\sum_{m=1}^M U^{(m)},
\]
and the between-imputation variance is
\[
B = \frac{1}{M-1}\sum_{m=1}^M \left(Q^{(m)}-\bar{Q}\right)^2.
\]
The total variance is then given by
\[
T = \bar{U} + \left(1+\frac{1}{M}\right)B.
\]
Hence, Rubin's rules combine estimation uncertainty within each imputed dataset and additional uncertainty arising from variation across imputations.

Missing covariate data are handled using MI, while sampling uncertainty is addressed through bootstrap resampling. In this paper we focus on the BOOT-MI approach, in which bootstrap resampling is performed before imputation. Let $t = 1,\dots,T$ index perturbation iterations. In each iteration, we proceed as follows:
\begin{enumerate}
\item Draw a bootstrap sample from the incomplete dataset.
\item Apply a stochastic imputation procedure to obtain a completed dataset.
\item Perform LASSO-based PDS on the completed dataset.
\end{enumerate}

Thus, each perturbation iteration produces a selected set $S_t$ of controls. Figure~\ref{fig:procedure} illustrates the workflow of the perturbation-based variable selection procedure.

Because both bootstrap sampling and stochastic imputation introduce randomness, the selected sets $S_t$ vary across perturbation iterations. While each individual set is typically sparse, the union across iterations,
\[
S_{\mathrm{union}} = \bigcup_{t=1}^T S_t,
\]
can become large when selection is unstable across perturbation iterations, thereby producing overly inclusive aggregated control sets.

To summarize variable-specific outcomes, we define for each variable $j$ and iteration $t$ a binary detection indicator $Z_{jt}$ (formalized in Section~\ref{sec:detection}). The sequence $\{Z_{jt}\}_{t=1}^{T}$ records whether variable $j$ is detected in each perturbed dataset and forms the basis for aggregation.

Rather than applying fixed aggregation rules such as unions or frequency thresholds, we interpret $\{Z_{jt}\}$ as realizations of a stochastic detection mechanism and aggregate them through a sequential evidence process.

\begin{landscape}
\begin{figure}
\centering
\caption{Procedure Overview for Bootstrapped Multiply Imputed Variable Selection}
\begin{minipage}{\linewidth}
\footnotesize
The figure illustrates how datasets containing missing values (gray shading) are bootstrapped and subsequently imputed. Variable selection is depicted by green shaded columns. The resulting selected variables can subsequently be aggregated using the proposed evidence measure.
\end{minipage}

\includegraphics[width=0.9\linewidth]{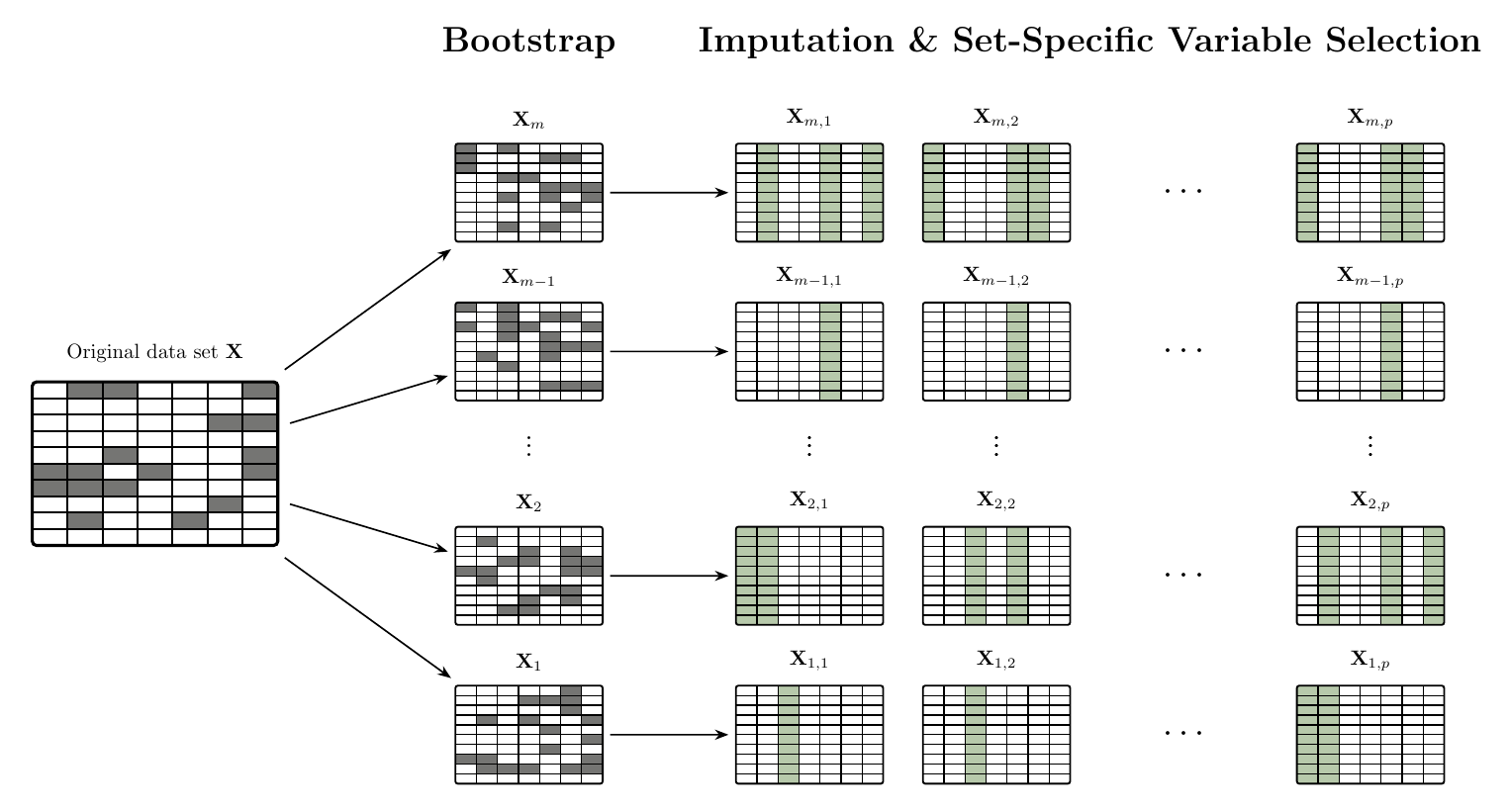}

\label{fig:procedure}
\end{figure}
\end{landscape}

\subsection{Model for Variable Relevance}
\label{sec:detection}

The perturbation-based procedure produces, for each variable $j$, a sequence of binary indicators $\{Z_{jt}\}_{t=1}^{T}$ summarizing variable-specific detection outcomes across perturbation iterations.

In contrast to pure selection indicators, we define a two-stage detection event that combines PDS screening with outcome-model confirmation. Specifically, let $S_{Y,t}$ denote the set of variables selected by LASSO in the outcome equation and let $S_{D,t}$ denote the set of variables selected in the auxiliary equation for $D$ at perturbation iteration $t$. We form the PDS candidate set
\[
S_t = S_{Y,t} \cup S_{D,t},
\]
which corresponds to the standard PDS union rule applied within each perturbation iteration.

To aggregate variable-selection outcomes across perturbation iterations, we require a variable-specific representation of the selection history. We therefore map the within-iteration PDS candidate set into a binary detection indicator for each variable and iteration. The next paragraph defines this detection rule, clarifies how it relates to the standard PDS inclusion rule, and then introduces the working probabilistic model that underlies sequential evidence accumulation.

\paragraph{Asymmetric detection rule.}
For each variable $j$ and iteration $t$, we define the detection indicator
\begin{align*}
Z_{jt} &= \mathbbm{1}\left\{ j \in S_{D,t} \;\lor\; \right. \\&\hspace{1.5cm} \left.\bigl( j \in S_{Y,t} \;\land\; \text{the coefficient of } X_j \text{ in } Y \sim D + X_{S_t} \text{ is significant at level } \alpha \bigr) \right\}.
\end{align*}

Thus, variables selected in the auxiliary equation ($j \in S_{D,t}$) are retained unconditionally, while variables selected only in the outcome equation ($j \in S_{Y,t}$) must additionally pass a significance test in the post-selection outcome regression.

This asymmetric rule reflects the structure of the PDS framework: variables predictive of $D$ are retained automatically, whereas variables selected only in the outcome equation are filtered more strictly to reduce overfitting and instability across perturbations. The resulting procedure should be interpreted as a detection rule built on the PDS candidate set rather than as the standard PDS inclusion rule. Accordingly, the formal inferential guarantees of standard PDS do not automatically carry over to the full procedure considered here. Because this detection rule is not identical to the standard PDS union rule, it is useful to state its relation to PDS explicitly.

\paragraph{Relation to PDS.}
The proposed detection rule departs from the standard PDS inclusion rule by applying an additional confirmation step to variables selected only in the outcome equation. In particular, defining
\[
S_t^{\mathrm{ASYM}} = \{ j : Z_{jt} = 1 \},
\]
we have
\[
S_t^{\mathrm{ASYM}} \subseteq S_t = S_{Y,t} \cup S_{D,t}.
\]
That is, the procedure can be interpreted as a data-dependent pruning of the PDS candidate set.

While the PDS framework provides formal guarantees under suitable conditions, these guarantees do not carry over to the modified rule. In particular, the procedure may omit weak but relevant controls, which can affect estimation of the coefficient on the variable of interest. The magnitude of this risk is therefore assessed empirically in Section~\ref{sec:simulation}.

Having clarified how the detection indicator is constructed and how it departs from standard PDS, we now use the resulting sequence $\{Z_{jt}\}$ as the basic input for the sequential evidence framework.

\paragraph{Detection model.}
For the purpose of sequential aggregation, we reduce the variable-specific decision problem to a binary distinction between relevance and irrelevance. We consider the hypotheses
\[
H_0: \text{Variable } j \text{ is irrelevant}, \qquad
H_1: \text{Variable } j \text{ is relevant}.
\]

Under this formulation, the sequence $\{Z_{jt}\}$ represents repeated outcomes of a stochastic detection mechanism induced by the combined selection-and-testing procedure across perturbation iterations.

We model this detection mechanism using a Bernoulli working model with effective detection probabilities
\[
Z_{jt} \mid H_0 \sim \text{Bernoulli}(\pi_0), \qquad
Z_{jt} \mid H_1 \sim \text{Bernoulli}(\pi_1),
\]
where
\[
\pi_0 = \Prob(Z_{jt}=1 \mid H_0), \qquad
\pi_1 = \Prob(Z_{jt}=1 \mid H_1),
\]
and $\pi_1 > \pi_0$.

This model is a working approximation: the detection indicators are neither independent across iterations nor generated by a true Bernoulli process. Rather, the model provides a parsimonious representation of the marginal detection behavior under relevance and irrelevance, which enables a tractable likelihood-ratio-type evidence construction in the next subsection.
The conditional independence assumption is adopted as a simplifying device to obtain a tractable likelihood-ratio representation; the resulting evidence measure is therefore an approximation whose practical validity is assessed empirically.

The parameters $\pi_0$ and $\pi_1$ summarize the effective detection behavior of the full two-stage procedure. In particular, $\pi_0$ reflects both the probability that an irrelevant variable enters the candidate set and the probability that it is subsequently confirmed. These quantities are therefore treated as working parameters characterizing the empirical detection behavior of the procedure.

\subsection{Likelihood-Ratio-Based Sequential Evidence Aggregation}
\label{sec:evidence}
For each variable $j$, the sequential aggregation problem is formulated as a comparison between the hypotheses
\[
H_{0j}: \text{variable } j \text{ is irrelevant},
\qquad
H_{1j}: \text{variable } j \text{ is relevant}.
\]
Under the working detection model from Section~\ref{sec:detection}, the sequence of binary detection indicators $\{Z_{js}\}_{s=1}^{t}$ has hypothesis-dependent working distributions. For a single perturbation iteration $s$, define the likelihood ratio
\[
L_{js}
=
\frac{\Prob(Z_{js}\mid H_{1j})}{\Prob(Z_{js}\mid H_{0j})}
=
\frac{\pi_1^{Z_{js}}(1-\pi_1)^{1-Z_{js}}}
     {\pi_0^{Z_{js}}(1-\pi_0)^{1-Z_{js}}}.
\]

Under the Bernoulli working model, $L_{js}$ compares the relevance
hypothesis $H_{1j}$ to the irrelevance hypothesis $H_{0j}$ for the single
observation $Z_{js}$. If the detection indicators were conditionally
independent across perturbation iterations, the joint likelihood ratio
for $\{Z_{js}\}_{s=1}^t$ would factorize as the product of the
per-iteration likelihood ratios. This motivates defining the cumulative
quantity
\[
E_{jt} = \prod_{s=1}^{t} L_{js},
\]
which aggregates evidence sequentially across perturbations. Under equal
prior weights, each $L_{js}$ corresponds to a Bayes factor under the
working model, so $E_{jt}$ admits a Bayes-factor–type interpretation.

For numerical stability, we work with the logarithmic form
\begin{equation}
\label{eq:logevidence}
\log E_{jt}
=
\sum_{s=1}^{t}
\left[
Z_{js}\log\!\left(\frac{\pi_1}{\pi_0}\right)
+
(1-Z_{js})\log\!\left(\frac{1-\pi_1}{1-\pi_0}\right)
\right],
\end{equation}
initialized at $\log E_{j0}=0$. Each perturbation iteration contributes
additively: a positive detection adds
$\log(\pi_1/\pi_0)>0$, while a non-detection adds
$\log((1-\pi_1)/(1-\pi_0))<0$. Hence the evidence process can be
updated recursively as
\[
\log E_{jt} = \log E_{j,t-1} + \log L_{jt}.
\]

This representation shows that the evidence process is a weighted
cumulative sum of detection outcomes, where detections and
non-detections contribute asymmetric increments determined by
$(\pi_0,\pi_1)$.

A useful summary of the working evidence model is the break-even detection frequency, that is, the detection probability at which the expected increment of the log-evidence process is exactly zero. If a variable has effective marginal detection probability $q$, then the expected log-evidence increment under the working model is
\[
m(q;\pi_0,\pi_1)
=
q\log\!\left(\frac{\pi_1}{\pi_0}\right)
+
(1-q)\log\!\left(\frac{1-\pi_1}{1-\pi_0}\right).
\]
This quantity is positive if detections occur sufficiently often and negative otherwise. The critical value at which the drift changes sign is
\[
q^\ast(\pi_0,\pi_1)
=
\frac{
\log\!\left(\frac{1-\pi_0}{1-\pi_1}\right)
}{
\log\!\left(\frac{\pi_1(1-\pi_0)}{\pi_0(1-\pi_1)}\right)
}.
\]
Thus, $q^\ast(\pi_0,\pi_1)$ is the detection frequency for which the expected log-evidence increment is zero. Variables with effective detection rates above $q^\ast$ have positive expected log-evidence drift, whereas variables with detection rates below $q^\ast$ have negative expected drift under the working model. In the implementation, this quantity is used as a diagnostic for whether a given calibration of $(\pi_0,\pi_1)$ yields an overly permissive evidence process.

Because $(\pi_0,\pi_1)$ are calibrated empirically and perturbation iterations are not independent, the resulting evidence measure should be interpreted as a working-model approximation rather than as an exact Bayesian posterior-odds update. The construction nevertheless provides a tractable and interpretable score that mimics likelihood-ratio accumulation.

\begin{remark}[Drift separation]
\label{rem:drift}
Under fixed $(\pi_0,\pi_1)$ with $\pi_1>\pi_0$ and conditional
independence across perturbation iterations, the expected increment of
$\log E_{jt}$ is positive under $H_1$ and negative under $H_0$. More
precisely,
\[
\mathbb{E}[\log L_{js}\mid H_1]
=
D_{KL}\!\left(
\mathrm{Bernoulli}(\pi_1)\,\|\,\mathrm{Bernoulli}(\pi_0)
\right)
>0
\]
and
\[
\mathbb{E}[\log L_{js}\mid H_0]
=
-\,D_{KL}\!\left(
\mathrm{Bernoulli}(\pi_0)\,\|\,\mathrm{Bernoulli}(\pi_1)
\right)
<0.
\]
Thus, under the working model, the cumulative log-evidence drifts
upward for relevant variables and downward for irrelevant variables.
This property links the proposed procedure to classical
likelihood-ratio–based sequential testing methods, such as the
SPRT.
\end{remark}

The speed of evidence accumulation is governed by the separation
between $\pi_1$ and $\pi_0$. Larger separation implies larger expected
increments in absolute value and therefore faster threshold crossing.
Accordingly, the proposed procedure is best viewed as a likelihood-ratio–motivated scoring and stopping rule for repeated detection outcomes, rather than as a formally calibrated Bayesian or frequentist testing procedure.
Relative to simple frequency thresholding, the likelihood-ratio formulation provides asymmetric weighting of detections and non-detections and yields a natural stopping rule based on cumulative evidence rather than a fixed number of iterations.

The evidence process also admits a decision-theoretic interpretation under the working model. In particular, the next proposition shows that thresholding the cumulative evidence is equivalent to the Bayes-optimal classification rule when false inclusion and false exclusion are assigned explicit losses and prior relevance probabilities are specified.

\begin{proposition}[Decision-theoretic interpretation of the evidence rule]
\label{prop:decision}
Consider a variable $j$ and suppose that, under the working model,
\[
Z_{jt}\mid H_h \sim \mathrm{Bernoulli}(\pi_h),
\qquad h\in\{0,1\},
\]
with $\pi_1>\pi_0$, and that $\{Z_{jt}\}_{t=1}^T$ are conditionally independent across $t$ given the hypothesis. Let
\[
E_{jT}
=
\prod_{t=1}^T
\frac{\Prob(Z_{jt}\mid H_1)}{\Prob(Z_{jt}\mid H_0)}
\]
denote the cumulative likelihood ratio.

Let $\ell_1>0$ denote the loss of false exclusion ($H_1 \to H_0$) and $\ell_0>0$ the loss of false inclusion ($H_0 \to H_1$). Then the Bayes-optimal rule is
\[
\text{decide } H_1
\quad\Longleftrightarrow\quad
\log E_{jT}
>
\log\!\left(
\frac{\ell_0\,\Pr(H_0)}{\ell_1\,\Pr(H_1)}
\right).
\]
Under equal prior probabilities, this simplifies to
\[
\text{decide } H_1
\quad\Longleftrightarrow\quad
\log E_{jT}
>
\log\!\left(\frac{\ell_0}{\ell_1}\right).
\]
\end{proposition}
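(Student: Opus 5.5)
The plan is the standard Bayes-decision argument for a two-action, two-state problem with 0--1-type losses. Write $\mathbf Z_j=(Z_{j1},\dots,Z_{jT})$ for the observed detection history and let $\delta(\mathbf Z_j)\in\{0,1\}$ be any decision rule. Under the loss specification of Proposition~\ref{prop:decision}, correct decisions cost nothing. Deciding $H_1$ costs $\ell_0$ when $H_0$ holds, and deciding $H_0$ costs $\ell_1$ when $H_1$ holds. The Bayes risk is minimized by choosing, for each realized $\mathbf Z_j$, the action with the smaller posterior expected loss. So the proof reduces to comparing two numbers for each $\mathbf z$.

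First, I compute the posterior expected losses. Deciding $H_1$ gives $\ell_0\Pr(H_0\mid\mathbf Z_j)$, and deciding $H_0$ gives $\ell_1\Pr(H_1\mid\mathbf Z_j)$. Hence $H_1$ is optimal iff
\[
\ell_1\Pr(H_1\mid\mathbf Z_j)>\ell_0\Pr(H_0\mid\mathbf Z_j).
\]
Ties are measure-zero or resolved arbitrarily, and I will remark that the strict inequality reflects a tie-breaking convention.

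Second, I apply Bayes' theorem. The common normalizer $\Pr(\mathbf Z_j)$ cancels, so the condition becomes
\[
\ell_1\Pr(H_1)\Pr(\mathbf Z_j\mid H_1)>\ell_0\Pr(H_0)\Pr(\mathbf Z_j\mid H_0).
\]
Under the Bernoulli working model both conditional likelihoods are strictly positive, since $0<\pi_0<\pi_1<1$ is implicitly needed for the log terms in \eqref{eq:logevidence} to be finite. I will state this as an assumption. Dividing through is then legitimate and gives
\[
\frac{\Pr(\mathbf Z_j\mid H_1)}{\Pr(\mathbf Z_j\mid H_0)}>\frac{\ell_0\Pr(H_0)}{\ell_1\Pr(H_1)}.
\]

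Third, I invoke conditional independence to factorize the joint likelihood ratio as
\[
\frac{\Pr(\mathbf Z_j\mid H_1)}{\Pr(\mathbf Z_j\mid H_0)}=\prod_{t=1}^T\frac{\Pr(Z_{jt}\mid H_1)}{\Pr(Z_{jt}\mid H_0)}=E_{jT}.
\]
Taking logarithms, which is monotone, gives the stated threshold. Setting $\Pr(H_0)=\Pr(H_1)$ yields the simplified form $\log(\ell_0/\ell_1)$.

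There is no real obstacle here. The only points needing care are the positivity and interior-parameter conditions that make division and logarithms valid, which also require priors in $(0,1)$, and the tie convention at equality. The factorization step is where the conditional independence assumption is genuinely used. I will note that without it, $E_{jT}$ is only a working approximation to the true likelihood ratio, consistent with the discussion in Section~\ref{sec:evidence}.
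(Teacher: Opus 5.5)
Your argument is correct and follows the paper's proof essentially step for step. Both compare the posterior expected losses $\ell_0\Pr(H_0\mid z_j^T)$ and $\ell_1\Pr(H_1\mid z_j^T)$, rewrite posterior odds as likelihood ratio times prior odds via Bayes' theorem, factorize under conditional independence, and take logarithms. Your added care about $0<\pi_0<\pi_1<1$, priors in $(0,1)$, and ties is a sensible refinement the paper leaves implicit. One caveat: with discrete Bernoulli histories, ties can occur with positive probability, so the strict inequality is a tie-breaking convention rather than a measure-zero matter.
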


A proof is provided in Appendix~\ref{app:theory}. As with the other theoretical results in this section, the proposition is derived under the idealized working model and should therefore be interpreted as an approximation to the behavior of the empirical procedure rather than as a finite-sample guarantee.

Proposition~\ref{prop:decision} shows that, under the working detection model, the cumulative evidence measure $E_{jT}$ fully determines the Bayes-optimal classification of variable relevance. Thus, the proposed threshold rule is not merely heuristic: it coincides with the optimal decision rule under asymmetric misclassification costs and prior relevance probabilities.

In particular, symmetric loss and prior specifications imply a threshold at zero on the log-evidence scale, while more conservative inclusion rules correspond to higher relative loss for false inclusion or higher prior odds in favor of irrelevance. The result therefore provides a direct mapping between threshold choice and the implied decision-theoretic trade-off between false inclusion and false exclusion.

\subsection{Behavior of the Evidence Process}
\label{sec:theory}

To provide intuition for the proposed evidence process, consider an idealized setting in which the detection indicators $\{Z_{jt}\}_{t\geq 1}$ are independent and identically distributed with
\[
Z_{jt} \sim \mathrm{Bernoulli}(q_j),
\]
where $q_j = \Prob(Z_{jt}=1)$ denotes the marginal detection probability.

Under the working model, the expected increment of the log-evidence process is
\[
m(q_j;\pi_0,\pi_1)
=
q_j\log\!\left(\frac{\pi_1}{\pi_0}\right)
+
(1-q_j)\log\!\left(\frac{1-\pi_1}{1-\pi_0}\right).
\]
This quantity determines the direction of evidence accumulation. In particular, the drift is positive if $q_j > q^\ast(\pi_0,\pi_1)$ and negative otherwise, where $q^\ast(\pi_0,\pi_1)$ is the break-even detection probability defined in Section~\ref{sec:evidence}.

The following proposition makes this intuition precise by showing that the sign of the drift determines the asymptotic classification outcome under the idealized working model.

\begin{proposition}
\label{prop:classification}
Under the idealized working model, if $q_j > q^\ast(\pi_0,\pi_1)$, then $\log E_{jt} \to +\infty$ almost surely, whereas if $q_j < q^\ast(\pi_0,\pi_1)$, then $\log E_{jt} \to -\infty$ almost surely. Consequently, for any fixed threshold, the probability of correct classification converges to one as the number of perturbation iterations increases.
\end{proposition}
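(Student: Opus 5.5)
The approach is to write the log-evidence as a random walk and apply the strong law of large numbers. Under the idealized working model, the increments
\[
\xi_{jt} = \log L_{jt} = Z_{jt}\log\!\left(\frac{\pi_1}{\pi_0}\right) + (1-Z_{jt})\log\!\left(\frac{1-\pi_1}{1-\pi_0}\right)
\]
are i.i.d. They take only two finite values, because $0<\pi_0<\pi_1<1$ is implicit in the working model, so they are bounded and integrable. Their common mean is $\mathbb{E}[\xi_{jt}] = m(q_j;\pi_0,\pi_1)$. Since $\log E_{jt} = \sum_{s=1}^t \xi_{js}$ by \eqref{eq:logevidence}, the strong law gives $t^{-1}\log E_{jt} \to m(q_j;\pi_0,\pi_1)$ almost surely.

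Next I will locate the sign of the drift. Write $a=\log(\pi_1/\pi_0)>0$ and $b=\log((1-\pi_1)/(1-\pi_0))<0$. Then $m(q)=b+q(a-b)$ is strictly increasing and affine in $q$, since $a-b=\log\bigl(\pi_1(1-\pi_0)/(\pi_0(1-\pi_1))\bigr)>0$. It vanishes exactly at $q=-b/(a-b)=q^\ast(\pi_0,\pi_1)$. Hence $m>0$ if $q_j>q^\ast$ and $m<0$ if $q_j<q^\ast$. On the almost-sure event where $t^{-1}\log E_{jt}\to m\neq 0$, we get $\log E_{jt}\ge tm/2\to+\infty$ for large $t$ when $m>0$, and symmetrically $\log E_{jt}\to-\infty$ when $m<0$. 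This proves the first two claims.

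For the classification statement, fix a threshold $c\in\mathbb{R}$, with the decision "$H_1$" if $\log E_{jt}>c$. Almost sure divergence to $+\infty$ implies $\mathbbm{1}\{\log E_{jt}>c\}\to 1$ almost surely, so by dominated convergence $\Prob(\log E_{jt}>c)\to 1$; the case $-\infty$ is symmetric.

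To call this ``correct'' classification, I will link it to the hypotheses through Remark~\ref{rem:drift}:
\begin{itemize}
\item under $H_1$ we have $q_j=\pi_1$, and $m(\pi_1)=D_{KL}(\mathrm{Bernoulli}(\pi_1)\|\mathrm{Bernoulli}(\pi_0))>0$, so $\pi_1>q^\ast$;
\item under $H_0$ we have $q_j=\pi_0$, and $m(\pi_0)<0$, so $\pi_0<q^\ast$.
\end{itemize}
Thus relevant variables are included and irrelevant ones excluded with probability tending to one. More generally, correctness should be read as ``$q_j>q^\ast$ for relevant variables and $q_j<q^\ast$ for irrelevant ones.''

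The main obstacle is not technical but interpretive: making precise what ``correct classification'' means when $q_j$ is not equal to $\pi_0$ or $\pi_1$. I will handle this by stating the classification claim relative to the side of $q^\ast$ on which $q_j$ lies, and use the Kullback--Leibler identity only for the special case where the working model is exact. The boundary case $q_j=q^\ast$ is excluded by hypothesis, so it needs no treatment; in that case the walk would oscillate by recurrence.
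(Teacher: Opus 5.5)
Your proof is correct and follows the paper's argument: you write $\log E_{jt}$ as a random walk with i.i.d.\ increments of mean $m(q_j;\pi_0,\pi_1)$, apply the strong law of large numbers, read off divergence to $\pm\infty$ from the sign of the drift, and then pass to threshold-crossing probabilities. The extra steps you supply are rigorous versions of points the paper only asserts (it says the sign of the drift holds ``by definition''): that $m$ is affine and strictly increasing in $q$ with unique root $q^\ast$, the dominated-convergence step, and the Kullback--Leibler link showing $\pi_0<q^\ast<\pi_1$.
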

A proof is provided in Appendix~\ref{app:theory}.

The result provides intuition for why the proposed score can separate persistent detections from persistent non-detections under the working model. Variables with stronger separation from the break-even detection rate are classified more rapidly, while variables near the boundary require more iterations.

Under the same idealized working model, the expected stopping time is approximately inversely proportional to the absolute drift of the log-evidence process; a formal statement is provided in Appendix~\ref{app:theory}.

The formal proofs and additional results on boundary-crossing probabilities are provided in Appendix~\ref{app:theory}. The practical behavior of the evidence process therefore depends on how the working detection probabilities are calibrated, which is discussed next.

\subsection{Calibration of Detection Probabilities}
\label{sec:calibration}

The evidence process requires specification of the working detection probabilities $(\pi_0,\pi_1)$. We calibrate these parameters in a short pilot phase and hold them fixed during subsequent evidence accumulation.

Let $T_{\mathrm{pilot}}$ denote the number of pilot iterations and define the pilot detection frequencies
\[
\hat{\pi}_{j,\mathrm{pilot}}
=
\frac{1}{T_{\mathrm{pilot}}}
\sum_{t=1}^{T_{\mathrm{pilot}}} Z_{jt}.
\]

We estimate $\pi_0$ from variables with low pilot detection frequencies and $\pi_1$ from variables with high pilot detection frequencies, using lower- and upper-quantile subsets of $\{\hat{\pi}_{j,\mathrm{pilot}}\}_{j=1}^p$.

To ensure stable behavior of the evidence process, we regularize $\hat{\pi}_0$ by shrinking it toward the nominal level and imposing a lower bound. This prevents excessively aggressive evidence accumulation when the raw pilot estimate is very small.

The resulting calibrated values $(\hat{\pi}_0,\hat{\pi}_1)$ define the likelihood-ratio increments used in the sequential evidence process. Additional calibration strategies and sensitivity analyses are discussed in Appendix~\ref{app:calibration}.

\subsection{Stopping Rule and Variable Classification}
\label{sec:stopping}

The sequential evidence process provides a variable-specific rule for
classifying controls as relevant or irrelevant as perturbation
iterations proceed. Let $E_{jt}$ denote the cumulative evidence in
favor of $H_1$ for variable $j$ after $t$ evidence iterations. Because
$E_{jt}$ is constructed as a likelihood-ratio--based evidence measure
under the working detection model, variable classification can be
formulated through evidence thresholds.

Specifically, for thresholds $0<\tau_0<1<\tau_1$, a variable is
classified as relevant once
\[
E_{jt}\geq \tau_1,
\]
and as irrelevant once
\[
E_{jt}\leq \tau_0.
\]
Thus, $\tau_1$ is the upper evidence threshold required for classifying
a variable as relevant, whereas $\tau_0$ is the lower evidence threshold
for classifying a variable as irrelevant.

Equivalently, in logarithmic form,
\[
\log E_{jt}\geq c_1
\qquad\text{or}\qquad
\log E_{jt}\leq c_0,
\]
where $c_1=\log\tau_1$ and $c_0=\log\tau_0$.

In the baseline implementation, we use symmetric log-thresholds
$c_1=c$ and $c_0=-c$ for some $c>0$, so that
\[
\tau_1=e^c
\qquad\text{and}\qquad
\tau_0=e^{-c}.
\]
Thus, $c$ is the required magnitude of cumulative log-evidence for classification: a variable is classified as relevant once its log-evidence reaches $+c$ and as irrelevant once its log-evidence reaches $-c$. Equivalently, on the evidence scale, $e^c$ is the upper evidence threshold in favor of relevance and $e^{-c}$ is the lower threshold in favor of irrelevance.

Hence, larger values of $c$ require stronger accumulated evidence before a classification is made, leading to more conservative decisions and typically longer perturbation runs, whereas smaller values of $c$ allow earlier classifications but may be less stable. In the simulation study, we consider values such as $c=\log(3)$, $c=\log(10)$, and $c=\log(30)$, corresponding to increasingly stringent evidence requirements.

\subsection{Practical Remarks}
\label{sec:limitations}

The proposed procedure relies on a working model in which detection indicators are treated as approximately independent across perturbation iterations. In practice, bootstrap resampling and stochastic imputation induce dependence, so the resulting evidence measure should be interpreted as an approximate likelihood-ratio score rather than as an exact Bayesian quantity. The calibration of $(\pi_0,\pi_1)$ is empirical and may affect finite-sample behavior; its performance is therefore assessed through simulation and sensitivity analysis. Additional implementation details are provided in Appendix~\ref{app:calibration}.

The preceding discussion developed the proposed procedure in modular form by introducing the detection rule, the working evidence model, the calibration strategy, and the stopping rule in turn. Algorithm~1 brings these elements together and summarizes the full practical implementation of the BOOT-MI sequential evidence procedure.

\begin{tcolorbox}[title={Algorithm 1: BOOT-MI Sequential Evidence Procedure},
                  colback=white,
                  colframe=black,
                  breakable]

\textbf{Input:} Incomplete data with outcome $Y$, variable of interest $D$, and candidate controls $X$; pilot length $T_{\mathrm{pilot}}$; maximum iterations $T_{\max}$; threshold $c>0$; minimum classification delay $t_{\min}$; significance level $\alpha$; calibration constants $\lambda_0$, $\pi_{0,\min}$, and $q^\ast_{\min}$.

\vspace{0.3em}

\textbf{Step 1: Pilot calibration.}

For $t=1,\dots,T_{\mathrm{pilot}}$:
\begin{enumerate}[itemsep=0.01em, topsep=0.1em]
\item Draw a bootstrap sample from the incomplete dataset.
\item Perform stochastic imputation to obtain a completed dataset.
\item Construct the PDS candidate set $S_t = S_{Y,t} \cup S_{D,t}$, where $S_{Y,t}$ and $S_{D,t}$ denote the variables selected by LASSO in the outcome equation and the auxiliary equation for $D$, respectively.
\item For each $j=1,\dots,p$, construct the asymmetric detection indicators:
\begin{align*}
Z_{jt} &= \mathbbm{1}\left\{ j \in S_{D,t} \;\lor\; \right. \\ &\hspace{1cm} \left.\bigl( j \in S_{Y,t} \;\land\; \text{coefficient on } X_j \text{ significant in } Y \sim D + X_{S_t} \text{ at level } \alpha \bigr) \right\}
\end{align*}
\end{enumerate}

Estimate $\hat{\pi}_0^{\mathrm{raw}}$ and $\hat{\pi}_1$ from the pilot
detection frequencies. Form the stabilized null calibration
\[
\tilde{\pi}_0
=
\max\!\left\{
\pi_{0,\min},
(1-\lambda_0)\hat{\pi}_0^{\mathrm{raw}} + \lambda_0 \alpha
\right\},
\]
and, if necessary, increase $\tilde{\pi}_0$ minimally until
$q^\ast(\hat{\pi}_0,\hat{\pi}_1)\ge q^\ast_{\min}$. Use the resulting
$\hat{\pi}_0$ and $\hat{\pi}_1$ in the evidence step.

\vspace{0.3em}

\textbf{Step 2: Initialize evidence.}

Set $\log E_{j0}=0$ for all $j=1,\dots,p$ and mark all variables as undecided.

\vspace{0.3em}

\textbf{Step 3: Sequential evidence accumulation.}

For $t=1,\dots,T_{\max}$:
\begin{enumerate}[itemsep=0.01em, topsep=0.1em]
\item Draw a bootstrap sample from the incomplete dataset.
\item Perform stochastic imputation to obtain a completed dataset.
\item Apply the PDS procedure.
\item Construct detection indicators $Z_{jt}$.
\item Update the cumulative log-evidence for each variable $j$:
\[
\log E_{jt}
=
\log E_{j,t-1}
+
Z_{jt}\log\!\left(\frac{\hat{\pi}_1}{\hat{\pi}_0}\right)
+
(1-Z_{jt})\log\!\left(\frac{1-\hat{\pi}_1}{1-\hat{\pi}_0}\right).
\]
\item If $t \ge t_{\min}$, classify variable $j$:
\[
\log E_{jt} \geq c \;\Rightarrow\; \text{relevant}, \qquad
\log E_{jt} \leq -c \;\Rightarrow\; \text{irrelevant}.
\]
\end{enumerate}

\textbf{Step 4: Stopping rule.}

Terminate when all variables are classified or when $t=T_{\max}$.

\vspace{0.3em}

\textbf{Output:} Set of variables classified as relevant.

\end{tcolorbox}

\section{Simulation Study}
\label{sec:simulation}

This section evaluates the proposed sequential evidence aggregation procedure in a Monte Carlo simulation study. The simulation is designed to address four questions: whether the method improves variable-selection accuracy relative to standard aggregation rules, whether the asymmetric detection rule preserves estimation performance, how sensitive the procedure is to the calibration of $(\pi_0,\pi_1)$ and the threshold $c$, and whether sequential stopping yields computational gains relative to fixed-budget alternatives.

The simulation builds on the data-generating processes of \citet{BelloniCH2014a}, extended to incorporate missing data and perturbation-based estimation. We generate a synthetic population of size $N=10{,}000$ and draw samples of size $n \in \{100,500,1000\}$. Each scenario is evaluated using 500 Monte Carlo replications.

\subsection{Design}
\label{sec:sim_design}

The main simulation design follows the partially linear model
\begin{align}
Y_i &= D_i \alpha_0 + X_i' \beta + \varepsilon_i, \\
D_i &= X_i' \gamma + \nu_i,
\end{align}
where $(\varepsilon_i,\nu_i)$ are independent standard normal random variables and $X_i \sim \mathcal{N}(0,\Sigma)$ with $\Sigma_{jl}=0.5^{|j-l|}$. The covariate vector has dimension $p=50$, of which $k_0=5$ variables are relevant. The coefficient vectors satisfy
\[
\beta_j = \gamma_j =
\begin{cases}
\kappa j^{-2}, & j \leq k_0, \\
0, & \text{otherwise},
\end{cases}
\]
where $\kappa$ is chosen to achieve target coefficients of determination $R^2 \in \{0.2,0.6\}$. We consider both homoscedastic and heteroscedastic error specifications. Missingness is introduced under MCAR, MAR, and MNAR mechanisms at rates of $20\%$, $40\%$, and $60\%$. As a robustness check, we additionally consider an extended design with group-specific nuisance components in the outcome equation.

Table~\ref{tab:sim_overview} summarizes the resulting simulation scenarios. The design spans both favorable and difficult settings for perturbation-based variable selection by varying sample size, signal strength, error structure, missing-data mechanism, and missingness severity. This allows us to assess not only average selection performance, but also the robustness of the proposed calibration and stopping rule under increasingly challenging conditions. Because Design~2 is included as a robustness check rather than as part of the full factorial design, the simulation evidence is driven primarily by the 108 scenarios from Design~1.

\begin{table}
 \caption{Overview on the Simulation Scenarios}
 \label{tab:sim_overview}
 \centering
 \begin{minipage}{\linewidth}
 \footnotesize
  The table summarizes the simulation scenarios used in the Monte Carlo study. For expositional convenience, we refer to the clean Belloni-type design without grouping variables as Design~1 and to the extended design with grouping nuisance terms as Design~2. In the implementation, these correspond to separate data-generating-process specifications, with the full factorial grid applied to Design~1 and a reduced robustness grid ($n=500$, homoscedastic errors) applied to Design~2.\\
  \end{minipage}

 \begin{tabular}{llc}
\toprule
\textbf{Decision} & \textbf{Scenario} & \# Scenarios\\
\midrule
\multirow{2}{*}{DGP Design} & Design 1: without grouping variables &\multirow{2}{*}{2}\\
& Design 2: with grouping nuisance terms \\
\midrule
\multirow{3}{*}{Sample Size} & $n=100$ &\multirow{3}{*}{3}\\
& $n=500$ \\
& $n=1000$ \\
\midrule
\multirow{2}{*}{Variance of DGP} & homoscedastic &\multirow{2}{*}{2}\\
& heteroscedastic \\
\midrule
\multirow{2}{*}{$R^2$ of DGP} & $R^2 = 20\%$ &\multirow{2}{*}{2}\\
& $R^2 = 60\%$\\
\midrule
\multirow{3}{*}{Missing Responses} & missing completely at random (MCAR) &\multirow{3}{*}{3}\\
& missing at random (MAR)\\
& missing not at random (MNAR)\\
\midrule
\multirow{3}{*}{Degree of Missing Data} & 20\% missing observations &\multirow{3}{*}{3}\\
& 40\% missing observations \\
& 60\% missing observations \\
\midrule
\multicolumn{2}{l}{Main scenarios (Design 1: $3 \times 2 \times 2 \times 3 \times 3$)} & 108\\
\multicolumn{2}{l}{Robustness scenarios (Design 2: $1 \times 1 \times 2 \times 3 \times 3$)} & 18\\
\midrule
Total&&126 scenarios\\
\bottomrule
\end{tabular}
\end{table}

\subsection{Implementation and Evaluation}
\label{sec:sim_metrics}

We implement the proposed method with pilot calibration ($T_{\mathrm{pilot}}=20$), maximum evidence iterations $T_{\max}=200$, and minimum classification delay $t_{\min}=5$. Unless stated otherwise, the simulation results use the baseline threshold $c=\log(10)$. Sensitivity analyses additionally consider $c\in\{\log(3),\log(10),\log(30)\}$. The baseline implementation uses the stabilized pilot-based calibration described in Section~\ref{sec:calibration}; nominal and permutation-based calibrations are considered only as robustness checks.

We compare the proposed method to three benchmark aggregation rules: the union rule and frequency thresholding at 50\% and 75\%. Benchmarks are evaluated both at a fixed budget ($T=200$) and at matched budgets corresponding to the stopping time of the proposed method.

We evaluate performance in terms of variable selection accuracy (TPR, FPR, precision, model size), estimation performance (bias, RMSE, coverage), and computational efficiency (number of perturbation iterations until stopping).

\subsection{Results}
\label{sec:results}

We organize the results around variable selection, treatment-effect estimation, and computational efficiency. Throughout, we compare the proposed sequential evidence method to the union rule and to frequency-threshold aggregation at the 50\% and 75\% levels. For the frequency-based benchmarks, we report both fixed-budget and matched-budget comparisons.

\subsubsection{Variable Selection}

Table~\ref{tab:selection} summarizes variable selection performance under the fixed-budget comparison. The union rule achieves perfect sensitivity by construction but selects all variables. The 75\% threshold is highly selective but suffers from low sensitivity. The proposed method achieves the highest TPR among the selective methods, at the cost of a higher FPR. The 50\% threshold provides a more conservative alternative with lower FPR but reduced sensitivity.

\begin{table}[t]
\caption{Variable Selection Performance: Fixed-Budget Comparison}
\label{tab:selection}
\centering
\begin{tabular}{lcccc}
\toprule
Method & TPR & FPR & Dist. to Ideal & Model Size \\
\midrule
Union Rule & 1.000 & 1.000 & 1.000 & 50.0 \\
Frequency threshold (50\%) & 0.636 & 0.076 & 0.385 & 6.6 \\
Frequency threshold (75\%) & 0.459 & 0.006 & 0.542 & 2.6 \\
Proposed Method & 0.774 & 0.384 & 0.486 & 21.1 \\
\bottomrule
\end{tabular}
\end{table}

Results are nearly unchanged under the matched-budget comparison in Table~\ref{tab:selection_matched}, indicating that the gains are not driven solely by longer runs of the benchmark methods.

\begin{table}[t]
\caption{Variable Selection Performance: Matched-Budget Comparison}
\label{tab:selection_matched}
\centering
\begin{tabular}{lcccc}
\toprule
Method & TPR & FPR & Dist. to Ideal & Model Size \\
\midrule
Union Rule (matched) & 1.000 & 1.000 & 1.000 & 50.0 \\
Frequency threshold (50\%, matched) & 0.635 & 0.083 & 0.388 & 6.9 \\
Frequency threshold (75\%, matched) & 0.461 & 0.008 & 0.540 & 2.6 \\
Proposed Method & 0.774 & 0.384 & 0.486 & 21.1 \\
\bottomrule
\end{tabular}
\end{table}

Figure~\ref{fig:frontier_fixed_overall} shows the aggregate TPR--FPR trade-off under the fixed-budget comparison, while Figure~\ref{fig:frontier_matched_overall} reports the matched-budget analogue. In both cases, the proposed method occupies a distinctly higher-TPR position than the 50\% threshold, while remaining far less inclusive than the union rule.

\begin{figure}[t]
\centering
\caption{TPR--FPR Frontier: Fixed-Budget Comparison (Overall)}
\begin{minipage}{0.9\linewidth}
\footnotesize
Each point represents the average TPR and FPR of a method across all simulation scenarios under the fixed-budget comparison ($T=200$ iterations for all benchmark methods). The cross marks the ideal point at $(0,1)$.
\end{minipage}
\label{fig:frontier_fixed_overall}
\includegraphics[width=0.85\linewidth]{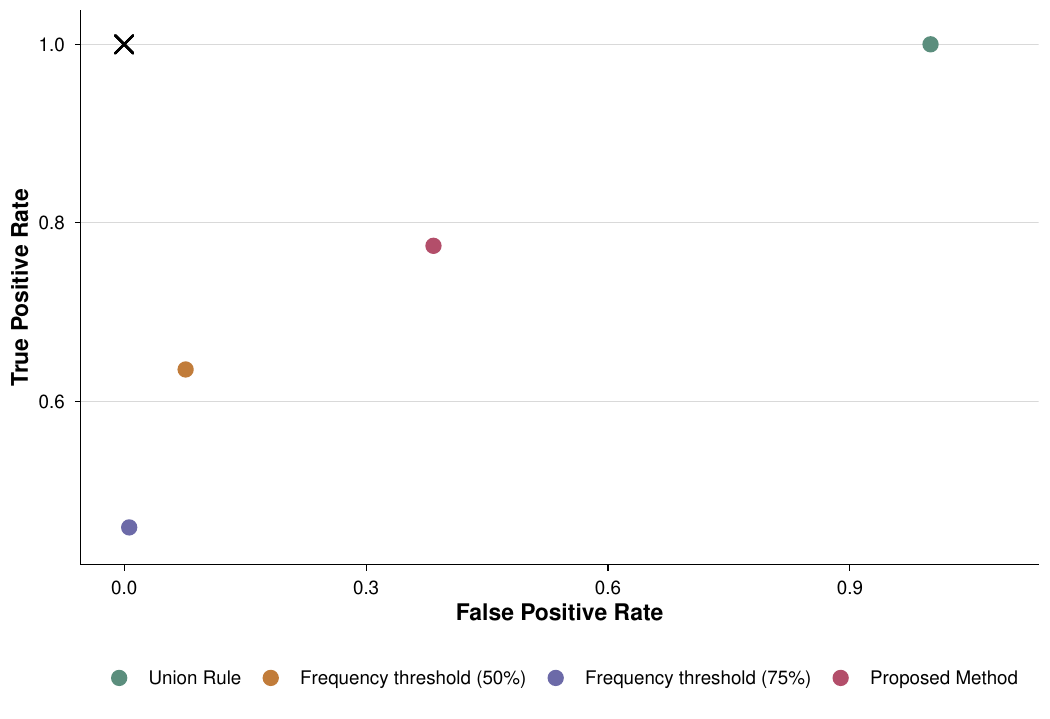}
\end{figure}

\begin{figure}[t]
\centering
\caption{TPR--FPR Frontier: Matched-Budget Comparison (Overall)}
\begin{minipage}{0.9\linewidth}
\footnotesize
Each point represents the average TPR and FPR of a method across all simulation scenarios under the matched-budget comparison, in which benchmark methods are restricted to the same average number of iterations as the proposed sequential method.
\end{minipage}
\label{fig:frontier_matched_overall}
\includegraphics[width=0.85\linewidth]{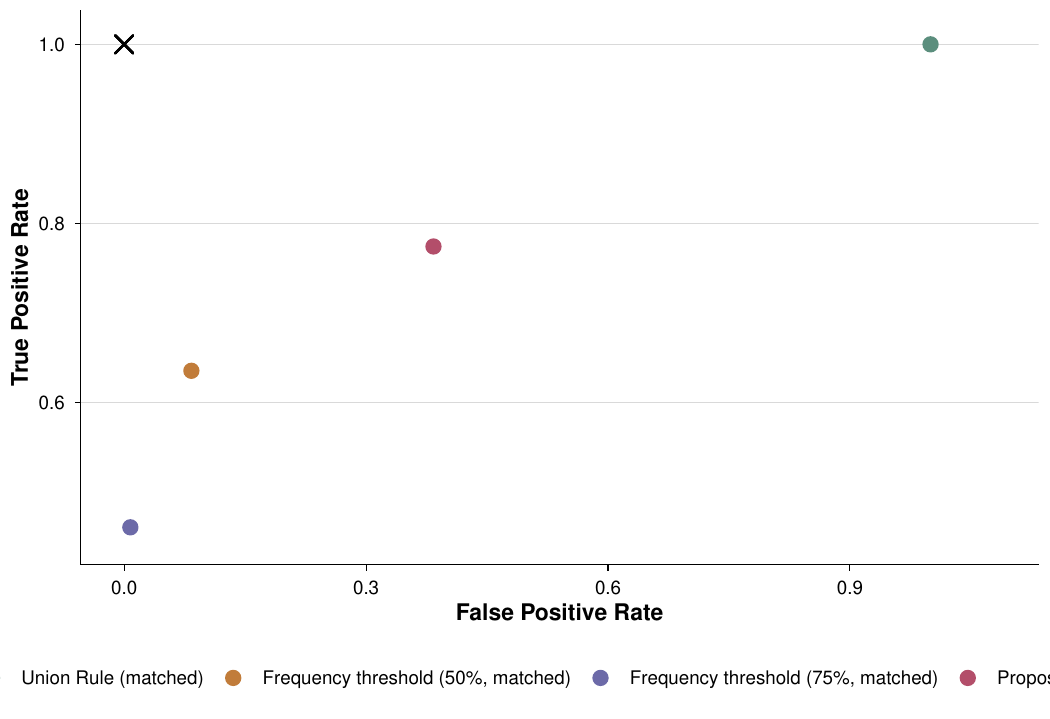}
\end{figure}

Figure~\ref{fig:tpr_fpr_comparison_fixed} shows that the proposed method consistently attains a higher TPR than the 50\% threshold across all scenario dimensions. The gain is largest in more difficult settings, especially small samples and MNAR missingness, while the higher FPR reflects the method's more permissive treatment of persistent but weaker signals.

\begin{figure}[t]
\centering
\caption{TPR and FPR by Scenario Dimension: Proposed Method vs.\ 50\% Frequency Threshold (Fixed-Budget)}
\begin{minipage}{0.9\linewidth}
\footnotesize
Mean TPR (top) and FPR (bottom) for the proposed method and the $50\%$ frequency threshold, broken down by sample size, missingness rate, missing-data mechanism, signal strength, and error structure. Error bars are omitted because Monte Carlo confidence intervals are too small to be visually distinguishable at the scale of the figure.
\end{minipage}
\label{fig:tpr_fpr_comparison_fixed}
\includegraphics[width=\linewidth]{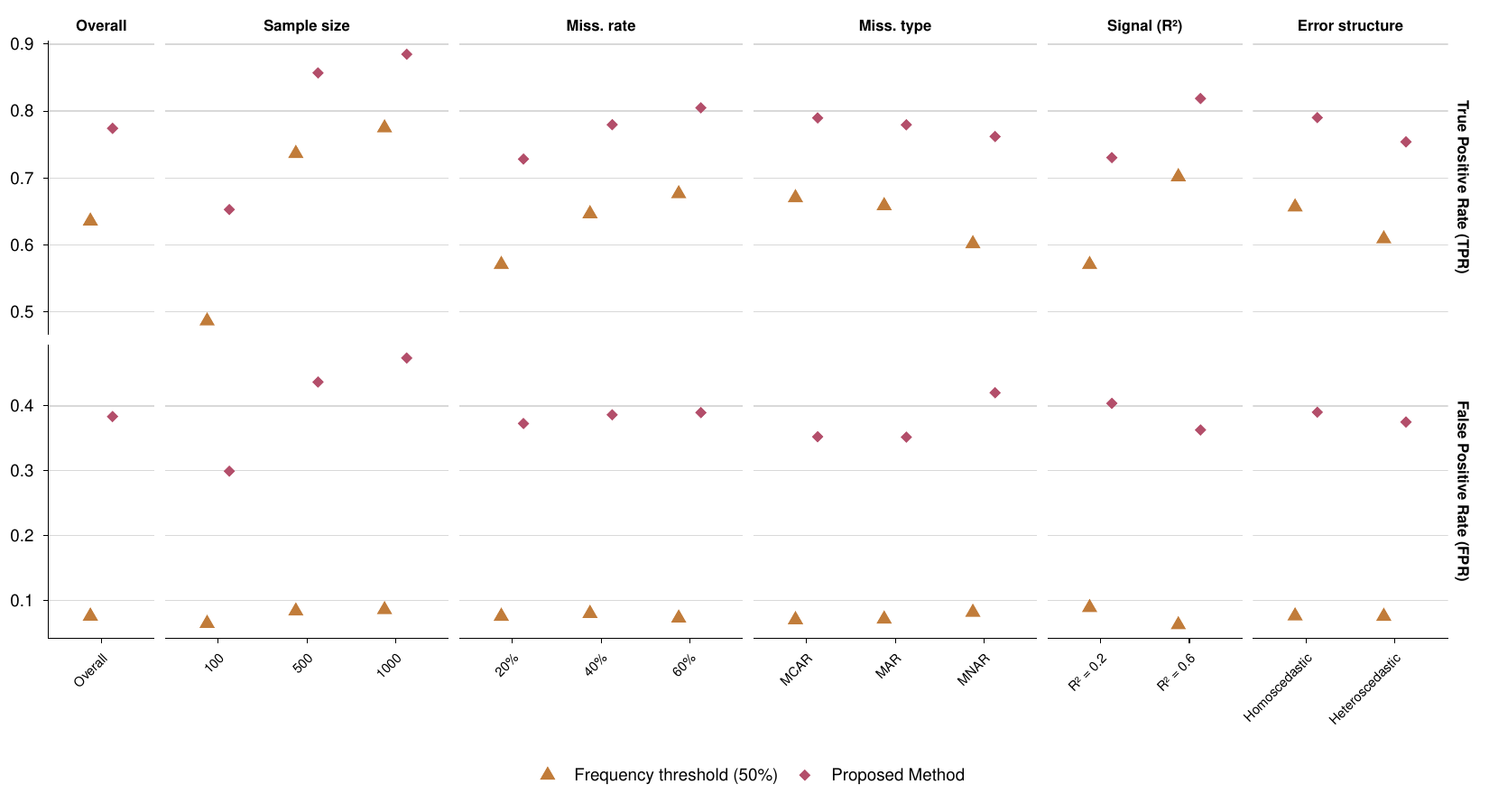}
\end{figure}

\subsubsection{Treatment Effect Estimation}
\label{sec:treatment_sim}

Table~\ref{tab:treatment} reports treatment-effect performance under the fixed-budget comparison. The proposed method achieves bias and RMSE comparable to the 50\% threshold while providing the highest coverage among the selective methods. By contrast, the 75\% threshold exhibits the largest bias and lowest coverage, reflecting the inferential cost of omitting relevant controls. Results under the matched-budget comparison are closely similar (Table~\ref{tab:treatment_matched}).

\begin{table}[t]
\caption{Treatment Effect Estimation: Fixed-Budget Comparison}
\label{tab:treatment}
\centering
\begin{tabular}{lcccc}
\toprule
Method & Bias & RMSE & Coverage & Model Size \\
\midrule
Union Rule & 0.012 & 0.191 & 0.867 & 50.0 \\
Frequency threshold (50\%) & 0.029 & 0.183 & 0.845 & 6.6 \\
Frequency threshold (75\%) & 0.053 & 0.204 & 0.824 & 2.6 \\
Proposed Method & 0.023 & 0.184 & 0.851 & 21.1 \\
\bottomrule
\end{tabular}
\end{table}

\begin{table}[t]
\caption{Treatment Effect Estimation: Matched-Budget Comparison}
\label{tab:treatment_matched}
\centering
\begin{tabular}{lcccc}
\toprule
Method & Bias & RMSE & Coverage & Model Size \\
\midrule
Union Rule (matched) & 0.012 & 0.191 & 0.867 & 50.0 \\
Frequency threshold (50\%, matched) & 0.030 & 0.183 & 0.845 & 6.9 \\
Frequency threshold (75\%, matched) & 0.053 & 0.203 & 0.825 & 2.6 \\
Proposed Method & 0.023 & 0.184 & 0.851 & 21.1 \\
\bottomrule
\end{tabular}
\end{table}

\subsubsection{Computational Efficiency}

Table~\ref{tab:runtime} shows that the proposed stopping rule requires on average 50.7 iterations, compared with the fixed budget of $T=200$ iterations used by the benchmark methods. This corresponds to a reduction of approximately 75\% in the number of perturbation iterations.

\begin{table}[t]
\caption{Computational Efficiency}
\label{tab:runtime}
\centering
\begin{minipage}{\linewidth}
\footnotesize
The runtime column reports the average total runtime per Monte Carlo replication under the current implementation. Because the perturbation history is precomputed before evaluating all aggregation rules, these runtimes are not method-specific and should therefore be interpreted only as a computational reference.\\
\end{minipage}
\begin{tabular}{lccc}
\toprule
Method & Avg Iterations & Std. Iterations & Runtime (s) \\
\midrule
Fixed BOOT-MI (T=200) & 200.0 & 0.0 & 624.3 \\
Proposed Sequential Method & 50.7 & 40.7 & 624.3 \\
\bottomrule
\end{tabular}
\end{table}

Figure~\ref{fig:evidence_paths} illustrates the sequential evidence accumulation for a representative realization. Relevant variables exhibit upward drift and irrelevant variables downward drift, with classification occurring once the paths cross the decision boundaries.

\begin{figure}[t]
\centering
\caption{Example Evidence Paths from a Single Realization}
\begin{minipage}{0.9\linewidth}
\footnotesize
The figure displays cumulative log-evidence paths $\log E_{jt}$ for all $p=50$ candidate variables from a single realization of the BOOT-MI sequential evidence procedure ($n=300$, $R^2=0.6$, $20\%$ MAR, pilot-calibrated $\hat{\pi}_0$). Blue paths correspond to truly relevant variables ($X_1,\dots,X_5$); red paths correspond to irrelevant variables. Dashed horizontal lines indicate the symmetric classification thresholds at $\pm c$. The upper shaded region marks the relevance classification zone ($\log E_{jt}\geq c$) and the lower shaded region the irrelevance zone ($\log E_{jt}\leq -c$). The procedure terminates once all paths have exited the continuation region.
\end{minipage}
\label{fig:evidence_paths}
\includegraphics[width=\linewidth]{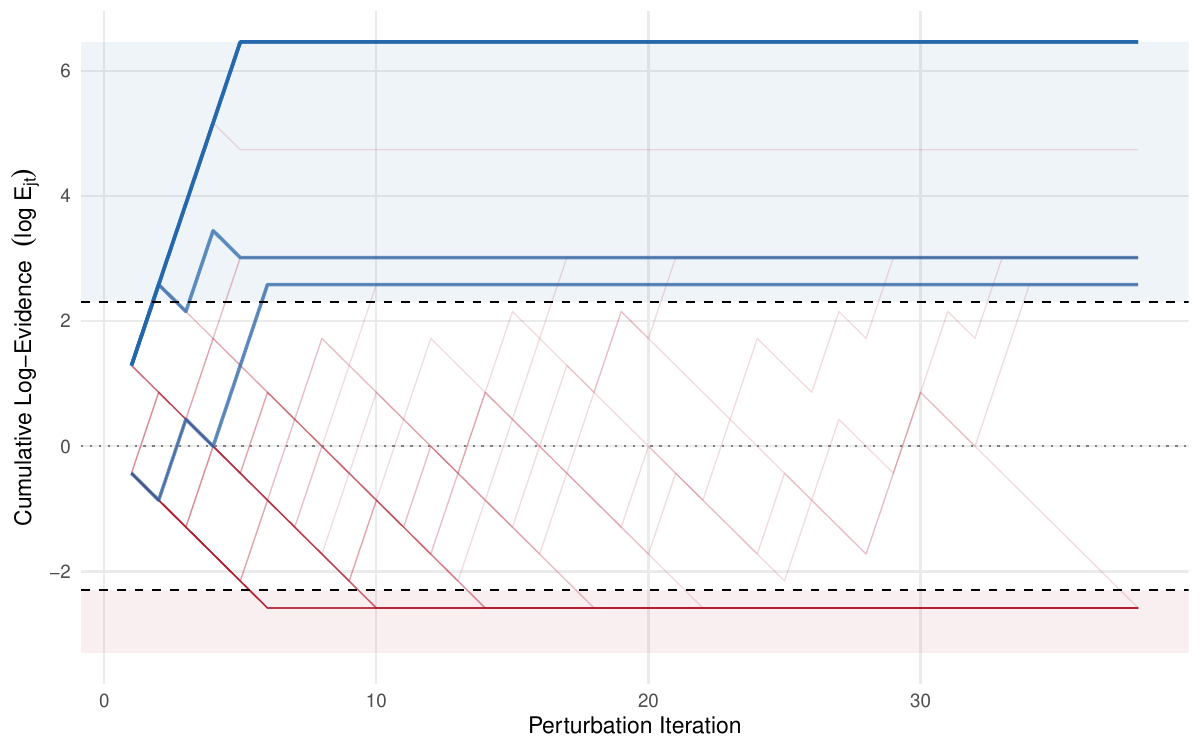}
\end{figure}

Across all simulation scenarios, the pilot calibration is stable and the fallback rule is not triggered.

\subsubsection{Summary of Simulation Findings}

The simulation study yields three main findings. First, the proposed method achieves the highest sensitivity among the selective methods and translates this into favorable estimation performance, especially relative to stricter frequency thresholding. Second, the TPR advantage is robust across simulation settings and is most pronounced in difficult scenarios such as small samples, MNAR missingness, and weak signals. Third, the sequential stopping rule substantially reduces computational cost relative to fixed-budget procedures.

Additional disaggregated results by sample size, missingness rate, missing-data mechanism, signal strength, and error structure are reported in Appendix~\ref{app:sim_tables}.

\section{Empirical Illustration}
\label{sec:empirical}

We illustrate how the proposed BOOT-MI sequential evidence procedure behaves in a realistic applied setting using data from Round 10 of the European Social Survey (ESS), integrated file, edition 3.3. The empirical analysis is based on an analytic subsample of women constructed from the ESS Round 10 integrated dataset. We restrict the sample to observations with nonmissing values for the employment indicator and the household child indicator; the remaining covariates are allowed to contain missing values and are handled through multiple imputation. After applying the sample restrictions described in Appendix~\ref{app:empirical_details}, the resulting analysis sample contains $n=8{,}543$ observations.

Each perturbation iteration consists of bootstrap resampling of the incomplete dataset, stochastic imputation using random forests, and variable selection using the PDS framework. Within each iteration, LASSO is applied to the outcome equation and the auxiliary equation for the variable of interest to construct the PDS candidate set. The asymmetric detection rule from Section~\ref{sec:detection} is then applied: variables selected in the auxiliary equation are retained automatically, while variables selected only in the outcome equation must additionally be significant at level $\alpha=0.05$ in the post-selection outcome regression.

The working detection probabilities $(\pi_0,\pi_1)$ are calibrated from a pilot phase with $T_{\mathrm{pilot}}=20$ iterations. In the ESS application, this yields $\hat{\pi}_{0,\mathrm{raw}}=0.075$, a stabilized value $\hat{\pi}_0=0.0687$, and $\hat{\pi}_1=0.9900$, implying a break-even detection frequency of $q^\ast=0.6296$. In the main empirical specification, we use the more conservative threshold $c=\log(1000)$, check threshold crossing only after $t_{\min}=5$ evidence iterations, and impose a maximum of $T_{\max}=200$ evidence iterations. We use a stricter threshold in the empirical application than in the simulation baseline because the goal there is not method comparison but a more conservative, high-evidence variable classification in a single realized dataset. Below, we report sensitivity to less conservative threshold choices.

As benchmarks, we consider three fixed-iteration aggregation rules based on the raw within-iteration PDS union history. Let $S_t^{\mathrm{PDS}} = S_{Y,t} \cup S_{D,t}$ denote the PDS candidate set in perturbation iteration $t$. The Union benchmark includes a variable if it appears in $S_t^{\mathrm{PDS}}$ in at least one iteration, while the Freq(50\%) and Freq(75\%) benchmarks include a variable if its empirical inclusion frequency across 200 perturbation iterations is at least $0.50$ or $0.75$, respectively.

\paragraph{Variable Selection}
Table~\ref{tab:selection_empirical} reports the variables selected under each aggregation rule. The benchmark methods aggregate repeated raw PDS candidate sets, whereas the proposed method aggregates asymmetric detection indicators through the calibrated sequential evidence process.

The union rule selects all 55 candidate variables and therefore produces the densest specification. The 50\% and 75\% frequency rules yield more parsimonious models with 41 and 31 variables, respectively. The proposed method selects 34 variables and thus lies between the two frequency-based rules in terms of model size.

Substantively, the proposed method retains a stable core set of controls that is also selected by the benchmark rules, including age, years of education, health, household income, subjective financial situation, unemployment experience, several education and marital-status indicators, and a subset of country indicators. Differences across methods arise primarily for weaker attitudinal variables, additional missing-value indicators, urbanicity indicators, and the breadth of included country effects. Overall, the proposed procedure delivers a data-driven compromise between the inclusiveness of the union rule and the stricter frequency thresholds.

\begin{landscape}
 \begin{table}[t]
\centering
\caption{Selected Variables by Aggregation Method}
\label{tab:selection_empirical}
\begin{minipage}{0.95\linewidth}
\small
The table summarizes the variables selected under different aggregation rules applied to the BOOT-MI procedure. A checkmark indicates that at least one variable within the corresponding group is selected by the given method. For categorical variables (e.g., education, marital status, country), selection refers to the inclusion of one or more associated dummy variables. The codes reported in parentheses denote the underlying ESS variables or constructed indicators.\\
\end{minipage}
\renewcommand{\arraystretch}{1.3}
\begin{tabular}{p{6cm} p{3.5cm} p{3.5cm} p{3.5cm} p{3.5cm}}

\toprule
Variable & Proposed & Union & Freq.\ 50\% & Freq.\ 75\% \\
\midrule

Age
& \checkmark {\tiny (agea)}
& \checkmark {\tiny (agea)}
& \checkmark {\tiny (agea)}
& \checkmark {\tiny (agea)} \\

Years of education
& \checkmark {\tiny (eduyrs)}
& \checkmark {\tiny (eduyrs)}
& \checkmark {\tiny (eduyrs)}
& \checkmark {\tiny (eduyrs)} \\

Health
& \checkmark {\tiny (health)}
& \checkmark {\tiny (health)}
& \checkmark {\tiny (health)}
& \checkmark {\tiny (health)} \\

Household income decile
& \checkmark {\tiny (hinctnta)}
& \checkmark {\tiny (hinctnta)}
& \checkmark {\tiny (hinctnta)}
& \checkmark {\tiny (hinctnta)} \\

Subjective financial situation
& \checkmark {\tiny (hincfel)}
& \checkmark {\tiny (hincfel)}
& \checkmark {\tiny (hincfel)}
& \checkmark {\tiny (hincfel)} \\

Happiness / life satisfaction
& \checkmark {\tiny (happy)}
& \checkmark {\tiny (happy, stflife)}
& \checkmark {\tiny (happy)}
& \checkmark {\tiny (happy)} \\

Trust variables
& \checkmark {\tiny (trstlgl)}
& \checkmark {\tiny (ppltrst, trstprl, trstlgl, trstplc)}
& \checkmark {\tiny (ppltrst)}
& \checkmark {\tiny (ppltrst)} \\

Political interest
& \checkmark {\tiny (polintr)}
& \checkmark {\tiny (polintr)}
& \checkmark {\tiny (polintr)}
& \checkmark {\tiny (polintr)} \\

Unemployment experience
& \checkmark {\tiny (uempla)}
& \checkmark {\tiny (uempla)}
& \checkmark {\tiny (uempla)}
& \checkmark {\tiny (uempla)} \\

Education indicators
& \checkmark {\tiny (eisced\_2, 4, 6, 7, NA)}
& \checkmark {\tiny (eisced\_2--7, 55, NA)}
& \checkmark {\tiny (eisced\_2, 4--7, NA)}
& \checkmark {\tiny (eisced\_2, 4, 6, 7)} \\

Marital-status indicators
& \checkmark {\tiny (maritalb\_2--4, 6)}
& \checkmark {\tiny (maritalb\_2--6, NA)}
& \checkmark {\tiny (maritalb\_2--6)}
& \checkmark {\tiny (maritalb\_2--4, 6)} \\

Urbanicity indicators
& \checkmark {\tiny (domicil\_3)}
& \checkmark {\tiny (domicil\_2--5, NA)}
& \checkmark {\tiny (domicil\_3, 5, NA)}
& \checkmark {\tiny (domicil\_3, NA)} \\

Religious belonging
& \checkmark {\tiny (rlgblg\_2)}
& \checkmark {\tiny (rlgblg\_2, NA)}
& \checkmark {\tiny (rlgblg\_2)}
& \checkmark {\tiny (rlgblg\_2)} \\

Country indicators
& \checkmark {\tiny (BG, CH, CZ, EE, FR, GR, IE, IS, IT, ME, NO, PT, SI, SK)}
& \checkmark {\tiny (all)}
& \checkmark {\tiny (BG, CH, CZ, EE, FR, GR, IE, IS, IT, LT, ME, MK, NL, NO, PT, SI, SK)}
& \checkmark {\tiny (BG, CH, EE, FR, GR, IE, IS, IT, NO, PT, SI)} \\

\midrule
Model size
& 34 & 55 & 41 & 31 \\
\bottomrule
\end{tabular}
\renewcommand{\arraystretch}{1.0}
\end{table}
\end{landscape}

Figure~\ref{fig:evidence_empirical} displays the corresponding log-evidence paths. Variables with sustained support across perturbations cross the upper threshold quickly, while variables with weak or unstable support remain near the continuation region or drift downward. In the baseline ESS specification, all 55 candidate variables are classified within 10 perturbation iterations.

\begin{figure}[t]
\centering
\caption{Empirical Evidence Paths}
\label{fig:evidence_empirical}
\begin{minipage}{0.9\linewidth}
\footnotesize
The figure displays cumulative log-evidence paths $\log E_{jt}$ for all $p=55$ candidate variables from the BOOT-MI sequential evidence procedure applied to the ESS data ($n=8{,}543$, pilot-calibrated $\hat{\pi}_0$, $c=\log(1000)$). Paths corresponding to variables classified as relevant are shown with higher opacity; paths of irrelevant variables are faded. Dashed horizontal lines indicate the symmetric classification thresholds at $\pm c$. The procedure classifies all variables within 10 iterations.\\
\end{minipage}
\includegraphics[width=0.9\linewidth]{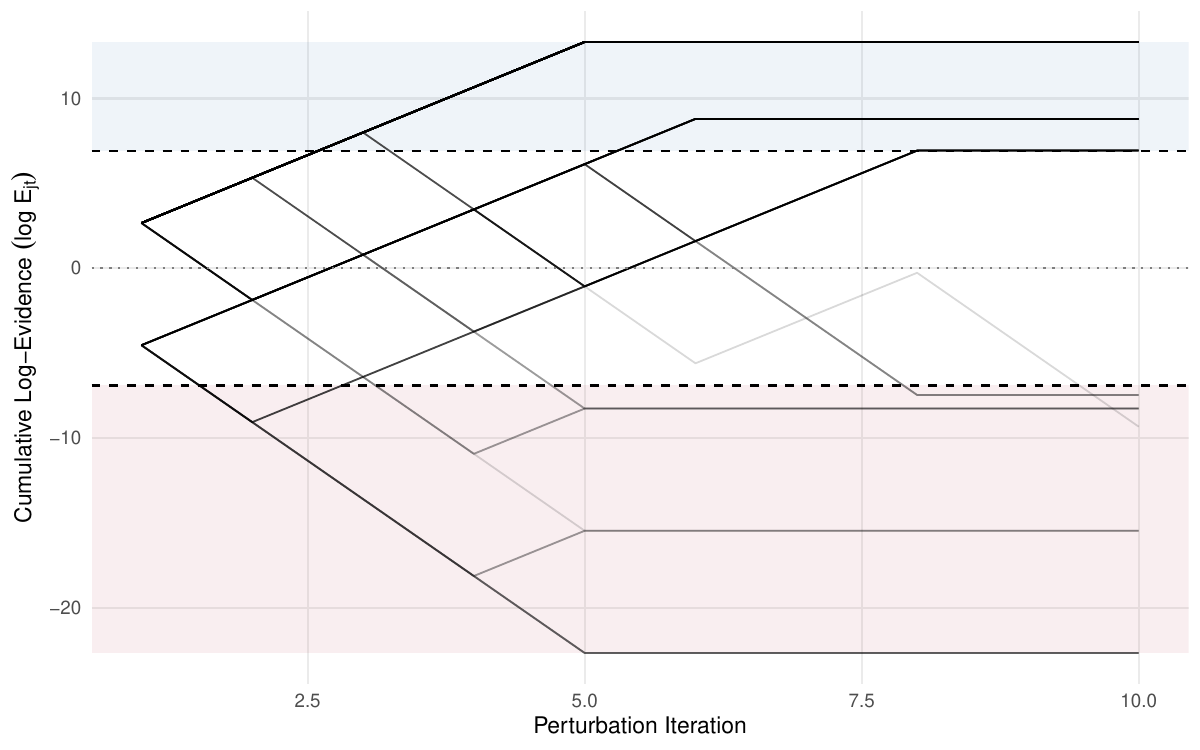}
\end{figure}

\paragraph{Estimated Coefficient on the Variable of Interest}

Table~\ref{tab:treatment_empirical} reports the estimated coefficient on the variable of interest across aggregation methods. For each rule, the selected control set is fixed first, and the coefficient is then re-estimated across $M=10$ multiply imputed datasets using Rubin's rules.

\begin{table}[t]
\centering
\caption{Treatment-Effect Estimates}
\label{tab:treatment_empirical}
\begin{tabular}{lcccc}
\toprule
Method & Estimate & Std.\ Error & 95\% CI & Model Size \\
\midrule
Proposed     & $-0.001$ & $0.011$ & $[-0.023,\; 0.022]$ & 34 \\
Union        & $\phantom{-}0.002$ & $0.011$ & $[-0.021,\; 0.024]$ & 55 \\
Freq (50\%)  & $\phantom{-}0.001$ & $0.011$ & $[-0.022,\; 0.023]$ & 41 \\
Freq (75\%)  & $-0.001$ & $0.012$ & $[-0.023,\; 0.022]$ & 31 \\
\bottomrule
\end{tabular}
\end{table}

The estimated coefficient is close to zero and statistically insignificant under all four aggregation rules. Point estimates range from $-0.001$ to $0.002$, standard errors range from $0.011$ to $0.012$, and the confidence intervals overlap almost completely. Thus, in this application, the estimated conditional association is highly robust to the choice of aggregation rule despite substantial differences in model size. A more detailed substantive discussion is provided in Appendix~\ref{app:empirical_details}.

\paragraph{Sensitivity Analysis}

Table~\ref{tab:sensitivity_empirical} summarizes the sensitivity of the empirical results to alternative evidence thresholds and calibration choices.

\begin{table}[t]
\centering
\caption{Sensitivity Analysis}
\label{tab:sensitivity_empirical}

\begin{minipage}{\linewidth}
\footnotesize
The table reports a sensitivity analysis around a pilot-calibrated reference specification with $c=\log(10)$, holding $T_{\mathrm{pilot}}=20$ and $T_{\max}=200$ fixed. For the threshold comparison, the pilot-calibrated working probabilities $(\hat{\pi}_0,\hat{\pi}_1)$ from the reference specification are held fixed across alternative values of $c$. The main empirical specification discussed in the text uses the more conservative threshold $c=\log(1000)$.\\
\end{minipage}
\begin{tabular}{lcccccc}
\toprule
Specification & Model Size & Iterations & $\hat{\pi}_0$ & $\hat{\pi}_{0,\mathrm{raw}}$ & $\hat{\pi}_1$ & $q^\ast$ \\
\midrule
Baseline ($c=\log(10)$, pilot $\pi_0$) & 36 & 7  & 0.102 & 0.102 & 0.974 & 0.611 \\
$c=\log(3)$                            & 34 & 6  & 0.102 & 0.102 & 0.974 & 0.611 \\
$c=\log(30)$                           & 37 & 12 & 0.102 & 0.102 & 0.974 & 0.611 \\
Permutation $\pi_0$ ($c=\log(10)$)     & 27 & 11 & 0.457 & 0.592 & 0.985 & 0.823 \\
\bottomrule
\end{tabular}
\end{table}

The empirical findings are qualitatively stable across reasonable changes in the evidence threshold and calibration method. Varying $c$ changes the selected model size only modestly, while larger thresholds mechanically require more iterations. Permutation calibration yields a noticeably more conservative specification because it implies a much larger null detection probability. Additional discussion is provided in Appendix~\ref{app:empirical_details}.

\section{Concluding Remarks}
\label{sec:conclusion}
This section concludes by summarizing the main contributions of the paper, discussing the limitations of the proposed procedure, and outlining promising directions for future research.

This paper proposes a sequential evidence-aggregation procedure for repeated stochastic variable-selection outcomes generated by bootstrap resampling and stochastic imputation, using PDS as the within-iteration screening device. By modeling detection outcomes across perturbation iterations with a working Bernoulli detection model and accumulating likelihood-ratio–based evidence, the procedure provides a structured inclusion rule together with a data-driven stopping criterion.

The Monte Carlo study evaluates the procedure across 126 simulation
scenarios varying sample size, signal strength, missing-data
mechanism, and missingness level. The results indicate that the
proposed method achieves a favorable trade-off between true positive
and false positive rates relative to the union rule and
frequency-based aggregation, while producing sparser models and
competitive treatment effect estimates. The sequential stopping rule
substantially reduces the number of perturbation iterations compared
to fixed-iteration procedures.

Several limitations should be noted. First, the Bernoulli working
model treats detection indicators as conditionally independent across
perturbation iterations, whereas bootstrap resampling and multiple
imputation from the same observed sample induce positive dependence.
The simulation study provides evidence on the practical consequences
of this approximation, but the procedure should not be interpreted as
delivering exact Bayesian posterior probabilities. Second, the
calibration of the detection probabilities $(\hat{\pi}_0,\hat{\pi}_1)$
relies on a short pilot phase and heuristic estimators. While the
simulation results suggest that the procedure is robust across the
calibration strategies considered, the sensitivity of the evidence
process to pilot length and calibration method warrants further
investigation. Third, the procedure is evaluated here in a linear
model setting with LASSO-based selection; its performance in
nonlinear or high-dimensional nonparametric settings remains to be
studied.

The evidence framework is not restricted to LASSO-based variable
selection. It applies to any procedure that produces binary detection
indicators with different detection probabilities under relevance and
irrelevance. For methods with standard errors, the detection event
can be defined via a significance threshold. For machine learning
methods without standard errors, detection events can be defined
through selection stability, with calibration based on empirical
detection frequencies or permutation of the outcome variable to
estimate the null detection rate. This generality suggests several
directions for future work, including the application of the
framework to random forests and gradient boosting methods, the
development of adaptive calibration strategies that update
$(\hat{\pi}_0,\hat{\pi}_1)$ during the sequential run, and the
extension to settings with multiple treatment variables or
high-dimensional treatment effect heterogeneity.

An important direction for future work is fuller uncertainty propagation across the entire pipeline, for example by embedding the complete perturbation-selection-estimation procedure in an outer resampling scheme. Such an extension would be particularly relevant for causal applications in which uncertainty from model selection is itself substantively important.

\section*{Acknowledgements}
The authors acknowledge support by the state of Baden-Württemberg through bwHPC and the German Research Foundation (DFG) through grant INST 35/1597-1 FUGG. The authors gratefully acknowledge the support of the HERMES network (Higher Education and Research in Management of European Universities) for a travel grant and for facilitating this study through its network.

\bibliography{library.bib}


\clearpage
\appendix

\section{Additional Methodological Details}
\label{app:methodology}

\subsection{Theoretical Properties of the Evidence Process}
\label{app:theory}

This section collects additional theoretical results for the sequential evidence process introduced in Section~\ref{sec:evidence}. The results are derived under an idealized working model in which detection indicators are independent and identically distributed conditional on the relevance status of each variable. They are intended to clarify the classification, stability, and stopping behavior of the proposed rule under the working model.

\paragraph{Setup.}
Let $\{Z_{jt}\}_{t\ge 1}$ denote the detection indicators for variable $j$, and assume
\[
Z_{jt} \sim \mathrm{Bernoulli}(q_j).
\]
Define the log-evidence process
\[
S_{jt} := \log E_{jt}
=
\sum_{s=1}^t X_{js},
\]
where
\[
X_{js}
=
Z_{js}\log\!\left(\frac{\pi_1}{\pi_0}\right)
+
(1-Z_{js})\log\!\left(\frac{1-\pi_1}{1-\pi_0}\right).
\]

\paragraph{Proof of Proposition~\ref{prop:classification}.}

\begin{proof}
Under the idealized working model,
\[
S_{jt} := \log E_{jt}
=
\sum_{s=1}^t X_{js},
\]
where
\[
X_{js}
=
Z_{js}\log\!\left(\frac{\pi_1}{\pi_0}\right)
+
(1-Z_{js})\log\!\left(\frac{1-\pi_1}{1-\pi_0}\right),
\]
and the \(\{X_{js}\}_{s\ge 1}\) are i.i.d. with finite expectation
\[
\mu_j
=
\mathbb{E}[X_{js}]
=
m(q_j;\pi_0,\pi_1).
\]

By the strong law of large numbers,
\[
\frac{S_{jt}}{t}
\to
\mu_j
\qquad\text{almost surely as } t\to\infty.
\]

If \(q_j > q^\ast(\pi_0,\pi_1)\), then by definition
\[
m(q_j;\pi_0,\pi_1)=\mu_j>0,
\]
so \(S_{jt}\to +\infty\) almost surely. Likewise, if
\(q_j < q^\ast(\pi_0,\pi_1)\), then \(\mu_j<0\), so
\(S_{jt}\to -\infty\) almost surely.

Now fix any classification threshold \(c>0\). If \(\mu_j>0\), then
\[
\Prob(S_{jt}>c)\to 1,
\]
and if \(\mu_j<0\), then
\[
\Prob(S_{jt}<-c)\to 1.
\]
Hence, for any fixed threshold, the probability that the sign of the threshold crossing agrees with the sign of the drift converges to one as \(t\to\infty\). This establishes the stated correct-classification result.
\end{proof}

The next result strengthens this intuition by giving an exponential bound on incorrect boundary crossing under nonzero drift.

\begin{proposition}[Exponential bound for incorrect boundary crossing]
\label{prop:errorbound_app}
If $\mu_j>0$, then there exists $\theta_j>0$ such that
\[
\Prob\!\left(\inf_{t\ge 1} S_{jt}\le -c\right)
\le e^{-\theta_j c}.
\]
If $\mu_j<0$, then
\[
\Prob\!\left(\sup_{t\ge 1} S_{jt}\ge c\right)
\le e^{-\theta_j c}.
\]
\end{proposition}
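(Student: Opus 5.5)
We will prove this with the classical exponential-martingale (Lundberg/Cramér) argument for random walks with bounded i.i.d. increments.

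\textbf{Step 1: choose $\theta_j$.} Consider the case $\mu_j>0$ first. The increments $X_{js}$ take only the two values $a=\log(\pi_1/\pi_0)>0$ and $b=\log((1-\pi_1)/(1-\pi_0))<0$. Their moment generating function
\[
\phi_j(\lambda)=\mathbb{E}[e^{\lambda X_{js}}]=q_j e^{\lambda a}+(1-q_j)e^{\lambda b}
\]
is therefore finite for all real $\lambda$ and strictly convex when $0<q_j<1$. It satisfies $\phi_j(0)=1$ and $\phi_j'(0)=\mu_j>0$, so $\phi_j(\lambda)<1$ for small $\lambda<0$. As $\lambda\to-\infty$ we have $\phi_j(\lambda)\to\infty$, because of the term $(1-q_j)e^{\lambda b}$ with $b<0$. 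By continuity and convexity there is a unique $\lambda^\ast<0$ with $\phi_j(\lambda^\ast)=1$, and we set $\theta_j:=-\lambda^\ast>0$.

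\textbf{Step 2: a nonnegative martingale.} The process $M_t=e^{-\theta_j S_{jt}}$ is a nonnegative martingale with $M_0=1$, since $\mathbb{E}[e^{-\theta_j X_{js}}]=1$ and the increments are independent.

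\textbf{Step 3: bound the crossing probability.} Let $\tau=\inf\{t:S_{jt}\le -c\}$. On $\{\tau<\infty\}$ we have $M_\tau\ge e^{\theta_j c}$. By optional stopping at the bounded time $\tau\wedge n$, together with Fatou's lemma (or simply Ville's maximal inequality for nonnegative supermartingales),
\[
\Prob(\tau\le n)\,e^{\theta_j c}\le \mathbb{E}[M_{\tau\wedge n}]=1.
\]
Letting $n\to\infty$ gives
\[
\Prob\!\left(\inf_{t\ge1}S_{jt}\le -c\right)\le e^{-\theta_j c}.
\]
Overshoot beyond $-c$ only helps the inequality, so no exact boundary hitting is needed.

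\textbf{Case $\mu_j<0$.} This case is symmetric. Now $\phi_j'(0)<0$ and $\phi_j(\lambda)\to\infty$ as $\lambda\to+\infty$, because of the term $q_j e^{\lambda a}$ with $a>0$. So there is a root $\theta_j>0$ of $\phi_j(\theta_j)=1$. The martingale $e^{\theta_j S_{jt}}$ and the same stopping argument give
\[
\Prob\!\left(\sup_{t\ge1}S_{jt}\ge c\right)\le e^{-\theta_j c}.
\]

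\textbf{Main obstacle.} The step needing most care is Step 1 in the degenerate cases $q_j\in\{0,1\}$, where $\phi_j$ is not strictly convex.
\begin{itemize}
\item If $\mu_j>0$ and $q_j=1$, every increment equals $a>0$, so the walk never goes down. The probability in the statement is then $0$, and any $\theta_j>0$ works.
\item If $\mu_j<0$ and $q_j=0$, every increment equals $b<0$, so the walk never goes up, and again any $\theta_j>0$ works.
\item The remaining combinations cannot occur: $\mu_j>0$ excludes $q_j=0$ (which would force $\mu_j=b<0$), and $\mu_j<0$ excludes $q_j=1$ (which would force $\mu_j=a>0$).
\end{itemize}
I will state these degenerate cases explicitly before running the main argument. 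I will also note that $\theta_j$ depends on $(q_j,\pi_0,\pi_1)$ but not on $c$.
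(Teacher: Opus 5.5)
Your proof is correct and follows the same exponential-martingale and optional-stopping route as the paper. The paper takes any $\theta_j>0$ with $\psi_j(\theta_j)=\mathbb{E}[e^{-\theta_j X_{js}}]<1$, which exists simply because $\psi_j'(0)=-\mu_j<0$, and uses the normalized martingale $e^{-\theta_j S_{jt}}/\psi_j(\theta_j)^t$; you instead take the exact root $\psi_j(\theta_j)=1$. That choice gives the largest exponent the argument can deliver, but it is what requires your global convexity/growth step and the separate treatment of $q_j\in\{0,1\}$, which the paper's choice covers automatically. You also work out the $\mu_j<0$ case, which the paper leaves implicit.
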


\begin{proof}
Define the moment generating function
\[
\psi_j(\theta):=\mathbb{E}[e^{-\theta X_{js}}].
\]
Since $\psi_j'(0)=-\mu_j<0$, there exists $\theta_j>0$ such that $\psi_j(\theta_j)<1$.

Define the martingale
\[
M_t := \frac{e^{-\theta_j S_{jt}}}{\psi_j(\theta_j)^t}.
\]
Applying optional stopping to the hitting time $\tau_c = \inf\{t: S_{jt}\le -c\}$ yields
\[
\Prob(\tau_c<\infty)\le e^{-\theta_j c}.
\]
\end{proof}

These results show that, under the working model, the evidence process behaves like a random walk with drift: it diverges to $+\infty$ for relevant variables and to $-\infty$ for irrelevant ones, with exponentially small probability of incorrect boundary crossing.

\begin{proposition}[Approximate expected stopping time under drift]
\label{prop:stoppingtime}
Consider the idealized working model in which, for a given variable $j$,
\[
Z_{jt} \sim \mathrm{Bernoulli}(q_j)
\qquad\text{independently over } t,
\]
and let
\[
S_{jt} := \log E_{jt}
=
\sum_{s=1}^t X_{js},
\]
where
\[
X_{js}
=
Z_{js}\log\!\left(\frac{\pi_1}{\pi_0}\right)
+
(1-Z_{js})\log\!\left(\frac{1-\pi_1}{1-\pi_0}\right).
\]
Let
\[
\mu_j := \mathbb{E}[X_{js}]
=
m(q_j;\pi_0,\pi_1),
\]
and define the first-passage stopping time
\[
\tau_j := \inf\{t\ge 1: |S_{jt}| \ge c\},
\]
for a symmetric threshold $c>0$.

If $\mu_j \neq 0$, then for large thresholds $c$, the expected stopping time satisfies the approximation
\[
\mathbb{E}[\tau_j]
\approx
\frac{c}{|\mu_j|}.
\]
In particular, variables whose effective detection probabilities $q_j$ are farther from the break-even value $q^\ast(\pi_0,\pi_1)$ are expected to be classified more rapidly.
\end{proposition}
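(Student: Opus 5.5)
The plan is to use Wald's identity, applied to the random walk $S_{jt}$ at the stopping time $\tau_j$. Because the increments $X_{js}$ take only two values, they are bounded, i.i.d., and have mean $\mu_j\neq 0$.

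First, I will check that $\mathbb{E}[\tau_j]<\infty$. By Proposition~\ref{prop:classification}, $S_{jt}\to\pm\infty$ almost surely, so $\tau_j<\infty$ almost surely. For integrability, I will use a standard blocking argument. Take $k$ large enough that $k|\mu_j|>2c$ plus a margin. Then each block of $k$ consecutive increments has a probability $\rho>0$ of moving the walk by more than $2c$ in the direction of the drift. If that happens, the walk must exit $(-c,c)$ within the block, whatever its starting point in the interval. It follows that $\Prob(\tau_j>nk)\le(1-\rho)^n$, so $\tau_j$ has a geometric tail and $\mathbb{E}[\tau_j]<\infty$.

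Second, Wald's identity then gives $\mathbb{E}[S_{j\tau_j}]=\mu_j\,\mathbb{E}[\tau_j]$, that is,
\[
\mathbb{E}[\tau_j]=\frac{\mathbb{E}[S_{j\tau_j}]}{\mu_j}.
\]
To evaluate the numerator, write $a=\log(\pi_1/\pi_0)>0$ and $b=\log((1-\pi_1)/(1-\pi_0))<0$, and set $K=\max(a,|b|)$. At the exit time the overshoot is bounded by a single step, so $S_{j\tau_j}\in[c,c+K)$ on an upper exit and $S_{j\tau_j}\in(-c-K,-c]$ on a lower exit. Take $\mu_j>0$; the case $\mu_j<0$ is symmetric. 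Let $p_-=\Prob(S_{j\tau_j}\le -c)$. Then
\[
c(1-p_-)-(c+K)p_- \;\le\; \mathbb{E}[S_{j\tau_j}] \;\le\; c+K .
\]
Proposition~\ref{prop:errorbound_app} gives $p_-\le e^{-\theta_j c}$. Combining these,
\[
\frac{c}{\mu_j}\bigl(1-2e^{-\theta_j c}\bigr)-\frac{K e^{-\theta_j c}}{\mu_j}
\;\le\; \mathbb{E}[\tau_j] \;\le\; \frac{c+K}{\mu_j}.
\]
Dividing through by $c/|\mu_j|$ shows that the ratio tends to $1$ as $c\to\infty$. This is the precise meaning I will attach to the approximation $\mathbb{E}[\tau_j]\approx c/|\mu_j|$, and the bounds above also give the non-asymptotic error.

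Third, for the final claim, I will note that $m(q;\pi_0,\pi_1)$ is affine in $q$ with slope $\log\bigl(\pi_1(1-\pi_0)/(\pi_0(1-\pi_1))\bigr)>0$ and vanishes at $q^\ast$. Hence $|\mu_j|$ equals this positive constant times $|q_j-q^\ast|$. A larger distance from $q^\ast$ therefore gives a smaller leading term $c/|\mu_j|$.

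I expect the main obstacle to be justifying Wald's identity rigorously, in particular proving $\mathbb{E}[\tau_j]<\infty$. The blocking argument handles this cleanly, because the increments are bounded and the drift is nonzero. The remaining care is in controlling the overshoot and the wrong-boundary exits, and the bound $K$ together with Proposition~\ref{prop:errorbound_app} does both.
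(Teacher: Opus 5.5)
Your proof is correct, and it takes a genuinely different route from the paper's. The paper argues heuristically. By the strong law, $S_{jt}\approx t\mu_j$ along typical paths, so the crossing condition $t|\mu_j|\approx c$ gives $\tau_j\approx c/|\mu_j|$, and the paper then simply ``takes expectations.'' That last step is not supported by the strong law alone. Passing from a pathwise statement to $\mathbb{E}[\tau_j]$ requires integrability, in effect uniform integrability of $\tau_j/c$, and the paper's argument gives no control over overshoot or exits through the wrong boundary.

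Your argument combines Stein's blocking argument for $\mathbb{E}[\tau_j]<\infty$, Wald's identity, the one-step overshoot bound $K$, and Proposition~\ref{prop:errorbound_app}. This is the standard rigorous proof of the elementary renewal theorem for first passage. It gives explicit two-sided bounds that hold for every $c$, and it gives ``$\approx$'' a precise meaning, namely $|\mu_j|\,\mathbb{E}[\tau_j]/c\to 1$. For the final claim it gives the exact relation $|\mu_j|=\log\!\bigl(\pi_1(1-\pi_0)/(\pi_0(1-\pi_1))\bigr)\,|q_j-q^\ast|$, where the paper only asserts monotonicity. The paper's version is shorter and conveys the intuition, but it should be read as a heuristic rather than a proof.

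Two minor remarks. First, the exponential bound is more than you need. After dividing your lower bound by $c$ you get $1-2p_- - Kp_-/c$, so $p_-\to 0$ as $c\to\infty$ already suffices. When $\mu_j>0$ that follows from $\inf_t S_{jt}>-\infty$ almost surely. Second, your appeal to Proposition~\ref{prop:classification} for $\tau_j<\infty$ almost surely is redundant once the blocking argument gives a geometric tail.
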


\begin{proof}
By construction, $\{S_{jt}\}_{t\ge 1}$ is a random walk with i.i.d. increments $\{X_{js}\}_{s\ge 1}$ and drift
\[
\mu_j = \mathbb{E}[X_{js}]
=
q_j\log\!\left(\frac{\pi_1}{\pi_0}\right)
+
(1-q_j)\log\!\left(\frac{1-\pi_1}{1-\pi_0}\right).
\]
If $\mu_j>0$, then the process drifts upward and typically exits through the upper boundary $+c$. If $\mu_j<0$, it drifts downward and typically exits through the lower boundary $-c$.

By the strong law of large numbers,
\[
\frac{S_{jt}}{t} \to \mu_j
\qquad\text{almost surely as } t\to\infty.
\]
Hence, for large $t$, the sample path is well approximated by the deterministic trajectory
\[
S_{jt} \approx t\,\mu_j.
\]
Under this approximation, the boundary crossing condition $|S_{jt}| \ge c$ becomes
\[
t\,|\mu_j| \approx c,
\]
which yields
\[
\tau_j \approx \frac{c}{|\mu_j|}.
\]
Taking expectations gives the approximation
\[
\mathbb{E}[\tau_j]
\approx
\frac{c}{|\mu_j|}.
\]

Finally, since $\mu_j = m(q_j;\pi_0,\pi_1)$ vanishes at
$q_j = q^\ast(\pi_0,\pi_1)$ and increases in absolute value as $q_j$
moves away from that break-even point, variables with stronger positive
or negative drift are expected to hit a decision boundary sooner.
\end{proof}

Proposition~\ref{prop:stoppingtime} provides a heuristic link between statistical separation and computational cost under the working model. Variables with effective detection probabilities close to the break-even level $q^\ast(\pi_0,\pi_1)$ have small drift and therefore tend to remain in the continuation region longer. By contrast, variables with strongly positive or strongly negative drift are classified more quickly. This helps explain why the proposed sequential procedure can reduce the average number of perturbation iterations when many variables are clearly relevant or clearly irrelevant.

\paragraph{Proof of Proposition~\ref{prop:decision}.}
The result follows from a standard Bayesian decision-theoretic argument combined with the factorization of the likelihood ratio under the working detection model.

\begin{proof}[Proof of Proposition~\ref{prop:decision}]
Let $z_j^T = (Z_{j1}, \dots, Z_{jT})$ denote the observed detection sequence for variable $j$. Consider the decision problem with action space
\[
a_1:\ \text{classify the variable as relevant},
\qquad
a_0:\ \text{classify the variable as irrelevant}.
\]

Let $\ell_1 > 0$ denote the loss of classifying a relevant variable as irrelevant (false exclusion), and $\ell_0 > 0$ the loss of classifying an irrelevant variable as relevant (false inclusion).

The posterior expected loss of action $a_1$ is
\[
\rho(a_1 \mid z_j^T)
=
\ell_0 \, \Pr(H_0 \mid z_j^T),
\]
since a loss is incurred only if the variable is in fact irrelevant. Similarly, the posterior expected loss of action $a_0$ is
\[
\rho(a_0 \mid z_j^T)
=
\ell_1 \, \Pr(H_1 \mid z_j^T),
\]
since a loss is incurred only if the variable is in fact relevant.

The Bayes rule selects $a_1$ if and only if
\[
\rho(a_1 \mid z_j^T) < \rho(a_0 \mid z_j^T),
\]
that is,
\[
\ell_0 \, \Pr(H_0 \mid z_j^T)
<
\ell_1 \, \Pr(H_1 \mid z_j^T).
\]
Rearranging yields
\[
\frac{\Pr(H_1 \mid z_j^T)}{\Pr(H_0 \mid z_j^T)}
>
\frac{\ell_0}{\ell_1}.
\]

By Bayes' theorem,
\[
\frac{\Pr(H_1 \mid z_j^T)}{\Pr(H_0 \mid z_j^T)}
=
\frac{\Prob(z_j^T \mid H_1)}{\Prob(z_j^T \mid H_0)}
\cdot
\frac{\Pr(H_1)}{\Pr(H_0)}.
\]
Under conditional independence across $t$, the likelihood ratio factorizes as
\[
\frac{\Prob(z_j^T \mid H_1)}{\Prob(z_j^T \mid H_0)}
=
\prod_{t=1}^T
\frac{\Prob(Z_{jt} \mid H_1)}{\Prob(Z_{jt} \mid H_0)}
=
E_{jT}.
\]
Hence,
\[
E_{jT} \cdot \frac{\Pr(H_1)}{\Pr(H_0)}
>
\frac{\ell_0}{\ell_1},
\]
which is equivalent to
\[
E_{jT}
>
\frac{\ell_0 \, \Pr(H_0)}{\ell_1 \, \Pr(H_1)}.
\]
Taking logarithms yields the stated decision rule:
\[
\log E_{jT}
>
\log\!\left(
\frac{\ell_0 \, \Pr(H_0)}{\ell_1 \, \Pr(H_1)}
\right).
\]

The equal-prior case follows immediately by setting $\Pr(H_0)=\Pr(H_1)$.
\end{proof}

\subsection{Calibration Details}
\label{app:calibration}

The working detection probabilities $(\pi_0,\pi_1)$ are calibrated empirically using pilot detection frequencies.

\paragraph{Pilot frequencies.}
For each variable,
\[
\hat{\pi}_{j,\mathrm{pilot}}
=
\frac{1}{T_{\mathrm{pilot}}}
\sum_{t=1}^{T_{\mathrm{pilot}}} Z_{jt}.
\]

\paragraph{Null calibration.}
We estimate $\pi_0$ using variables with low pilot detection frequencies. Let
\[
q_{0.10}\!\left(\{\hat{\pi}_{j,\mathrm{pilot}}\}_{j=1}^p\right)
\]
denote the 10th empirical quantile of the pilot frequencies. We then define
\[
\mathcal{L}_{\mathrm{pilot}}
=
\left\{
j :
\hat{\pi}_{j,\mathrm{pilot}}
\le
q_{0.10}\!\left(\{\hat{\pi}_{j,\mathrm{pilot}}\}_{j=1}^p\right)
\right\},
\qquad
\hat{\pi}_0^{\mathrm{raw}}
=
\frac{1}{|\mathcal{L}_{\mathrm{pilot}}|}
\sum_{j\in\mathcal{L}_{\mathrm{pilot}}}
\hat{\pi}_{j,\mathrm{pilot}}.
\]

\paragraph{Alternative calibration.}
We estimate $\pi_1$ using variables with high pilot detection frequencies. Let
\[
q_{0.50}\!\left(\{\hat{\pi}_{j,\mathrm{pilot}}\}_{j=1}^p\right)
\]
denote the empirical median of the pilot frequencies. We then define
\[
\mathcal{H}_{\mathrm{pilot}}
=
\left\{
j :
\hat{\pi}_{j,\mathrm{pilot}}
>
q_{0.50}\!\left(\{\hat{\pi}_{j,\mathrm{pilot}}\}_{j=1}^p\right)
\right\},
\qquad
\hat{\pi}_1
=
\frac{1}{|\mathcal{H}_{\mathrm{pilot}}|}
\sum_{j\in\mathcal{H}_{\mathrm{pilot}}}
\hat{\pi}_{j,\mathrm{pilot}}.
\]

\paragraph{Stabilization.}
To avoid overly aggressive evidence accumulation, we regularize the null estimate:
\[
\tilde{\pi}_0
=
\max\!\left\{
\pi_{0,\min},
(1-\lambda_0)\hat{\pi}_0^{\mathrm{raw}} + \lambda_0 \alpha
\right\}.
\]

We additionally enforce a minimum break-even detection rate $q^\ast_{\min}$ by increasing $\tilde{\pi}_0$ if necessary.

\paragraph{Permutation calibration.}
Permutation-based calibration breaks either the outcome or treatment relationship but not both simultaneously under the asymmetric detection rule. As a result, it yields upward-biased estimates of $\pi_0$ and is therefore used only as a robustness check.

\paragraph{Fallback rule.}
If $\hat{\pi}_1 \le \hat{\pi}_0$, the pilot phase is extended. If separation is still not achieved, the procedure falls back to frequency thresholding.

\subsection{Additional Remarks on the Working Model}
\label{app:workingmodel}

The Bernoulli detection model assumes conditional independence across perturbation iterations. In practice, bootstrap resampling and stochastic imputation induce dependence. The model should therefore be interpreted as a parsimonious approximation to marginal detection behavior.

The resulting evidence measure is not an exact Bayes factor but a likelihood-ratio–type score derived from a working model. Its practical validity is evaluated empirically in the simulation study.

\subsection{Additional Implementation Details}
\label{app:implementation}

The procedure uses one stochastic imputation per perturbation iteration. This avoids the need for within-iteration aggregation and keeps the detection process binary. Rubin pooling is applied only after variable selection.

Final standard errors reflect estimation and imputation uncertainty conditional on the selected model but do not fully account for selection uncertainty.

\section{Additional Simulation Results}
\label{app:sim_tables}

This appendix reports additional disaggregated results from the Monte Carlo study described in Section~\ref{sec:simulation}. The results complement the aggregate findings in the main text and show that the qualitative ranking of methods is stable across scenario dimensions.

\subsection{Fixed-Budget Selection Performance by Simulation Dimension}

Tables in this subsection report fixed-budget selection results by sample size, missingness rate, missing-data mechanism, signal strength, error structure, and design. Across these dimensions, the proposed method consistently maintains a higher true positive rate than the 50\% frequency threshold, while the latter remains more conservative in terms of false positive rate and model size.

\paragraph{Sample size.}
Selection performance improves with sample size for all methods. The proposed method's TPR rises from 0.653 at $n=100$ to 0.885 at $n=1000$,
while the 50\% threshold rises from 0.486 to 0.775. The TPR advantage of the proposed method therefore persists across all sample sizes, although it comes with a higher FPR.

\paragraph{Missingness rate.}
The proposed method's TPR increases from 0.728 at 20\% missingness to 0.805 at 60\% missingness, while the 50\% threshold improves from 0.571 to 0.676. The qualitative ordering of methods is therefore stable across missingness levels.

\paragraph{Missing-data mechanism.}
The largest differences appear under MNAR, where all methods perform worse. Under MNAR, the proposed method attains a TPR of 0.762 compared with 0.602 for the 50\% threshold. Under MCAR and MAR, the gap is smaller but remains substantial.

\paragraph{Signal strength and error structure.}
Higher signal strength improves selection performance for all methods, while heteroscedasticity modestly reduces TPR. These changes do not alter the overall ranking of methods.

\begin{table}[t]
\caption{Fixed-Budget Variable Selection Performance by $n$}
\label{tab:fixed_by_nn}
\centering
\begin{tabular}{llcccc}
\toprule
$n$ & Method & TPR & FPR & Dist. to Ideal & Model Size \\
\midrule
100 & Union Rule & 1.000 & 1.000 & 1.000 & 50.0 \\
100 & Frequency threshold (50\%) & 0.486 & 0.065 & 0.522 & 5.4 \\
100 & Frequency threshold (75\%) & 0.302 & 0.006 & 0.698 & 1.8 \\
100 & Proposed Method & 0.653 & 0.299 & 0.487 & 16.7 \\
500 & Union Rule & 1.000 & 1.000 & 1.000 & 50.0 \\
500 & Frequency threshold (50\%) & 0.737 & 0.084 & 0.291 & 7.5 \\
500 & Frequency threshold (75\%) & 0.565 & 0.007 & 0.435 & 3.1 \\
500 & Proposed Method & 0.857 & 0.437 & 0.479 & 23.9 \\
1000 & Union Rule & 1.000 & 1.000 & 1.000 & 50.0 \\
1000 & Frequency threshold (50\%) & 0.775 & 0.086 & 0.259 & 7.7 \\
1000 & Frequency threshold (75\%) & 0.603 & 0.006 & 0.398 & 3.3 \\
1000 & Proposed Method & 0.885 & 0.474 & 0.504 & 25.7 \\
\bottomrule
\end{tabular}
\end{table}

\begin{table}[t]
\caption{Fixed-Budget Variable Selection Performance by Missing rate}
\label{tab:fixed_by_miss_rate}
\centering
\begin{tabular}{llcccc}
\toprule
Missing rate & Method & TPR & FPR & Dist. to Ideal & Model Size \\
\midrule
20\% & Union Rule & 1.000 & 1.000 & 1.000 & 50.0 \\
20\% & Frequency threshold (50\%) & 0.571 & 0.076 & 0.446 & 6.3 \\
20\% & Frequency threshold (75\%) & 0.395 & 0.007 & 0.606 & 2.3 \\
20\% & Proposed Method & 0.728 & 0.373 & 0.508 & 20.4 \\
40\% & Union Rule & 1.000 & 1.000 & 1.000 & 50.0 \\
40\% & Frequency threshold (50\%) & 0.647 & 0.080 & 0.375 & 6.8 \\
40\% & Frequency threshold (75\%) & 0.469 & 0.006 & 0.532 & 2.6 \\
40\% & Proposed Method & 0.780 & 0.386 & 0.482 & 21.3 \\
60\% & Union Rule & 1.000 & 1.000 & 1.000 & 50.0 \\
60\% & Frequency threshold (50\%) & 0.676 & 0.073 & 0.346 & 6.7 \\
60\% & Frequency threshold (75\%) & 0.500 & 0.005 & 0.501 & 2.7 \\
60\% & Proposed Method & 0.805 & 0.390 & 0.472 & 21.6 \\
\bottomrule
\end{tabular}
\end{table}

\begin{table}[t]
\caption{Fixed-Budget Variable Selection Performance by Missingness}
\label{tab:fixed_by_miss_type}
\centering
\begin{tabular}{llcccc}
\toprule
Missingness & Method & TPR & FPR & Dist. to Ideal & Model Size \\
\midrule
MCAR & Union Rule & 1.000 & 1.000 & 1.000 & 50.0 \\
MCAR & Frequency threshold (50\%) & 0.670 & 0.071 & 0.348 & 6.5 \\
MCAR & Frequency threshold (75\%) & 0.500 & 0.006 & 0.500 & 2.8 \\
MCAR & Proposed Method & 0.790 & 0.352 & 0.445 & 19.8 \\
MAR & Union Rule & 1.000 & 1.000 & 1.000 & 50.0 \\
MAR & Frequency threshold (50\%) & 0.659 & 0.072 & 0.360 & 6.5 \\
MAR & Frequency threshold (75\%) & 0.483 & 0.006 & 0.517 & 2.7 \\
MAR & Proposed Method & 0.780 & 0.352 & 0.450 & 19.7 \\
MNAR & Union Rule & 1.000 & 1.000 & 1.000 & 50.0 \\
MNAR & Frequency threshold (50\%) & 0.602 & 0.082 & 0.421 & 6.7 \\
MNAR & Frequency threshold (75\%) & 0.420 & 0.006 & 0.581 & 2.4 \\
MNAR & Proposed Method & 0.762 & 0.420 & 0.530 & 22.7 \\
\bottomrule
\end{tabular}
\end{table}

\begin{table}[t]
\caption{Fixed-Budget Variable Selection Performance by $R^2$}
\label{tab:fixed_by_target.R2}
\centering
\begin{tabular}{llcccc}
\toprule
$R^2$ & Method & TPR & FPR & Dist. to Ideal & Model Size \\
\midrule
R² = 0.2 & Union Rule & 1.000 & 1.000 & 1.000 & 50.0 \\
R² = 0.2 & Frequency threshold (50\%) & 0.571 & 0.089 & 0.450 & 6.9 \\
R² = 0.2 & Frequency threshold (75\%) & 0.381 & 0.008 & 0.619 & 2.3 \\
R² = 0.2 & Proposed Method & 0.730 & 0.404 & 0.529 & 21.8 \\
R² = 0.6 & Union Rule & 1.000 & 1.000 & 1.000 & 50.0 \\
R² = 0.6 & Frequency threshold (50\%) & 0.702 & 0.063 & 0.318 & 6.3 \\
R² = 0.6 & Frequency threshold (75\%) & 0.538 & 0.004 & 0.463 & 2.9 \\
R² = 0.6 & Proposed Method & 0.819 & 0.363 & 0.441 & 20.4 \\
\bottomrule
\end{tabular}
\end{table}

\begin{table}[t]
\caption{Fixed-Budget Variable Selection Performance by Error structure}
\label{tab:fixed_by_heteroscedastic}
\centering
\begin{tabular}{llcccc}
\toprule
Error structure & Method & TPR & FPR & Dist. to Ideal & Model Size \\
\midrule
Homoscedastic & Union Rule & 1.000 & 1.000 & 1.000 & 50.0 \\
Homoscedastic & Frequency threshold (50\%) & 0.657 & 0.076 & 0.365 & 6.7 \\
Homoscedastic & Frequency threshold (75\%) & 0.481 & 0.006 & 0.519 & 2.7 \\
Homoscedastic & Proposed Method & 0.790 & 0.390 & 0.482 & 21.5 \\
Heteroscedastic & Union Rule & 1.000 & 1.000 & 1.000 & 50.0 \\
Heteroscedastic & Frequency threshold (50\%) & 0.609 & 0.076 & 0.410 & 6.5 \\
Heteroscedastic & Frequency threshold (75\%) & 0.430 & 0.006 & 0.570 & 2.4 \\
Heteroscedastic & Proposed Method & 0.754 & 0.375 & 0.491 & 20.6 \\
\bottomrule
\end{tabular}
\end{table}

\begin{table}[t]
\caption{Fixed-Budget Variable Selection Performance by Design}
\label{tab:fixed_by_design}
\centering
\begin{tabular}{llcccc}
\toprule
Design & Method & TPR & FPR & Dist. to Ideal & Model Size \\
\midrule
4 & Union Rule & 1.000 & 1.000 & 1.000 & 50.0 \\
4 & Frequency threshold (50\%) & 0.618 & 0.075 & 0.401 & 6.5 \\
4 & Frequency threshold (75\%) & 0.440 & 0.006 & 0.561 & 2.5 \\
4 & Proposed Method & 0.760 & 0.374 & 0.487 & 20.6 \\
5 & Union Rule & 1.000 & 1.000 & 1.000 & 50.0 \\
5 & Frequency threshold (50\%) & 0.748 & 0.084 & 0.280 & 7.5 \\
5 & Frequency threshold (75\%) & 0.578 & 0.006 & 0.422 & 3.2 \\
5 & Proposed Method & 0.865 & 0.443 & 0.481 & 24.2 \\
\bottomrule
\end{tabular}
\end{table}

\FloatBarrier

\subsection{Matched-Budget Selection Performance}

The matched-budget results show that the qualitative conclusions from the fixed-budget comparison remain unchanged when benchmark methods are restricted to the same average number of perturbation iterations as the proposed stopping rule. This indicates that the observed performance differences are not driven solely by computational budget.

\begin{table}[t]
\caption{Matched-Budget Variable Selection Performance by $n$}
\label{tab:matched_by_nn}
\centering
\begin{tabular}{llcccc}
\toprule
$n$ & Method & TPR & FPR & Dist. to Ideal & Model Size \\
\midrule
100 & Union Rule (matched) & 1.000 & 1.000 & 1.000 & 50.0 \\
100 & Frequency threshold (50\%, matched) & 0.487 & 0.069 & 0.522 & 5.5 \\
100 & Frequency threshold (75\%, matched) & 0.305 & 0.006 & 0.696 & 1.8 \\
100 & Proposed Method & 0.653 & 0.299 & 0.487 & 16.7 \\
500 & Union Rule (matched) & 1.000 & 1.000 & 1.000 & 50.0 \\
500 & Frequency threshold (50\%, matched) & 0.736 & 0.093 & 0.296 & 7.9 \\
500 & Frequency threshold (75\%, matched) & 0.566 & 0.009 & 0.434 & 3.2 \\
500 & Proposed Method & 0.857 & 0.437 & 0.479 & 23.9 \\
1000 & Union Rule (matched) & 1.000 & 1.000 & 1.000 & 50.0 \\
1000 & Frequency threshold (50\%, matched) & 0.772 & 0.097 & 0.266 & 8.2 \\
1000 & Frequency threshold (75\%, matched) & 0.605 & 0.008 & 0.396 & 3.4 \\
1000 & Proposed Method & 0.885 & 0.474 & 0.504 & 25.7 \\
\bottomrule
\end{tabular}
\end{table}

\begin{table}[t]
\caption{Matched-Budget Variable Selection Performance by Missing rate}
\label{tab:matched_by_miss_rate}
\centering
\begin{tabular}{llcccc}
\toprule
Missing rate & Method & TPR & FPR & Dist. to Ideal & Model Size \\
\midrule
20\% & Union Rule (matched) & 1.000 & 1.000 & 1.000 & 50.0 \\
20\% & Frequency threshold (50\%, matched) & 0.571 & 0.083 & 0.449 & 6.6 \\
20\% & Frequency threshold (75\%, matched) & 0.397 & 0.009 & 0.603 & 2.4 \\
20\% & Proposed Method & 0.728 & 0.373 & 0.508 & 20.4 \\
40\% & Union Rule (matched) & 1.000 & 1.000 & 1.000 & 50.0 \\
40\% & Frequency threshold (50\%, matched) & 0.646 & 0.087 & 0.378 & 7.2 \\
40\% & Frequency threshold (75\%, matched) & 0.470 & 0.008 & 0.530 & 2.7 \\
40\% & Proposed Method & 0.780 & 0.386 & 0.482 & 21.3 \\
60\% & Union Rule (matched) & 1.000 & 1.000 & 1.000 & 50.0 \\
60\% & Frequency threshold (50\%, matched) & 0.676 & 0.080 & 0.349 & 7.0 \\
60\% & Frequency threshold (75\%, matched) & 0.501 & 0.007 & 0.499 & 2.8 \\
60\% & Proposed Method & 0.805 & 0.390 & 0.472 & 21.6 \\
\bottomrule
\end{tabular}
\end{table}

\begin{table}[t]
\caption{Matched-Budget Variable Selection Performance by Missingness}
\label{tab:matched_by_miss_type}
\centering
\begin{tabular}{llcccc}
\toprule
Missingness & Method & TPR & FPR & Dist. to Ideal & Model Size \\
\midrule
MCAR & Union Rule (matched) & 1.000 & 1.000 & 1.000 & 50.0 \\
MCAR & Frequency threshold (50\%, matched) & 0.670 & 0.077 & 0.351 & 6.8 \\
MCAR & Frequency threshold (75\%, matched) & 0.502 & 0.008 & 0.498 & 2.9 \\
MCAR & Proposed Method & 0.790 & 0.352 & 0.445 & 19.8 \\
MAR & Union Rule (matched) & 1.000 & 1.000 & 1.000 & 50.0 \\
MAR & Frequency threshold (50\%, matched) & 0.658 & 0.078 & 0.363 & 6.8 \\
MAR & Frequency threshold (75\%, matched) & 0.485 & 0.007 & 0.515 & 2.8 \\
MAR & Proposed Method & 0.780 & 0.352 & 0.450 & 19.7 \\
MNAR & Union Rule (matched) & 1.000 & 1.000 & 1.000 & 50.0 \\
MNAR & Frequency threshold (50\%, matched) & 0.602 & 0.091 & 0.424 & 7.1 \\
MNAR & Frequency threshold (75\%, matched) & 0.422 & 0.008 & 0.579 & 2.5 \\
MNAR & Proposed Method & 0.762 & 0.420 & 0.530 & 22.7 \\
\bottomrule
\end{tabular}
\end{table}

\begin{table}[t]
\caption{Matched-Budget Variable Selection Performance by $R^2$}
\label{tab:matched_by_target.R2}
\centering
\begin{tabular}{llcccc}
\toprule
$R^2$ & Method & TPR & FPR & Dist. to Ideal & Model Size \\
\midrule
R² = 0.2 & Union Rule (matched) & 1.000 & 1.000 & 1.000 & 50.0 \\
R² = 0.2 & Frequency threshold (50\%, matched) & 0.571 & 0.097 & 0.452 & 7.2 \\
R² = 0.2 & Frequency threshold (75\%, matched) & 0.384 & 0.010 & 0.617 & 2.4 \\
R² = 0.2 & Proposed Method & 0.730 & 0.404 & 0.529 & 21.8 \\
R² = 0.6 & Union Rule (matched) & 1.000 & 1.000 & 1.000 & 50.0 \\
R² = 0.6 & Frequency threshold (50\%, matched) & 0.701 & 0.070 & 0.322 & 6.6 \\
R² = 0.6 & Frequency threshold (75\%, matched) & 0.539 & 0.006 & 0.462 & 2.9 \\
R² = 0.6 & Proposed Method & 0.819 & 0.363 & 0.441 & 20.4 \\
\bottomrule
\end{tabular}
\end{table}

\begin{table}[t]
\caption{Matched-Budget Variable Selection Performance by Error structure}
\label{tab:matched_by_heteroscedastic}
\centering
\begin{tabular}{llcccc}
\toprule
Error structure & Method & TPR & FPR & Dist. to Ideal & Model Size \\
\midrule
Homoscedastic & Union Rule (matched) & 1.000 & 1.000 & 1.000 & 50.0 \\
Homoscedastic & Frequency threshold (50\%, matched) & 0.656 & 0.084 & 0.368 & 7.1 \\
Homoscedastic & Frequency threshold (75\%, matched) & 0.483 & 0.008 & 0.517 & 2.8 \\
Homoscedastic & Proposed Method & 0.790 & 0.390 & 0.482 & 21.5 \\
Heteroscedastic & Union Rule (matched) & 1.000 & 1.000 & 1.000 & 50.0 \\
Heteroscedastic & Frequency threshold (50\%, matched) & 0.609 & 0.083 & 0.412 & 6.8 \\
Heteroscedastic & Frequency threshold (75\%, matched) & 0.432 & 0.008 & 0.568 & 2.5 \\
Heteroscedastic & Proposed Method & 0.754 & 0.375 & 0.491 & 20.6 \\
\bottomrule
\end{tabular}
\end{table}

\begin{table}[t]
\caption{Matched-Budget Variable Selection Performance by Design}
\label{tab:matched_by_design}
\centering
\begin{tabular}{llcccc}
\toprule
Design & Method & TPR & FPR & Dist. to Ideal & Model Size \\
\midrule
4 & Union Rule (matched) & 1.000 & 1.000 & 1.000 & 50.0 \\
4 & Frequency threshold (50\%, matched) & 0.618 & 0.082 & 0.404 & 6.8 \\
4 & Frequency threshold (75\%, matched) & 0.442 & 0.008 & 0.559 & 2.5 \\
4 & Proposed Method & 0.760 & 0.374 & 0.487 & 20.6 \\
5 & Union Rule (matched) & 1.000 & 1.000 & 1.000 & 50.0 \\
5 & Frequency threshold (50\%, matched) & 0.749 & 0.094 & 0.284 & 8.0 \\
5 & Frequency threshold (75\%, matched) & 0.579 & 0.008 & 0.421 & 3.3 \\
5 & Proposed Method & 0.865 & 0.443 & 0.481 & 24.2 \\
\bottomrule
\end{tabular}
\end{table}

\FloatBarrier

\subsection{Fixed-Budget Treatment Effect Estimation}

The following tables report treatment-effect performance by scenario dimension under the fixed-budget comparison. The main pattern mirrors the aggregate findings in the main text: the proposed method achieves estimation performance similar to the 50\% threshold, while the 75\% threshold tends to incur larger bias and lower coverage because of under-selection.

\begin{table}[t]
\caption{Fixed-Budget Treatment Effect Estimation by $n$}
\label{tab:treatment_fixed_by_nn}
\centering
\begin{tabular}{llcccc}
\toprule
$n$ & Method & Bias & RMSE & Coverage & Model Size \\
\midrule
100 & Union Rule & -0.043 & 0.218 & 0.988 & 50.0 \\
100 & Frequency threshold (50\%) & -0.017 & 0.208 & 0.946 & 5.4 \\
100 & Frequency threshold (75\%) & 0.035 & 0.245 & 0.911 & 1.8 \\
100 & Proposed Method & -0.029 & 0.205 & 0.962 & 16.7 \\
500 & Union Rule & 0.041 & 0.164 & 0.820 & 50.0 \\
500 & Frequency threshold (50\%) & 0.053 & 0.157 & 0.814 & 7.5 \\
500 & Frequency threshold (75\%) & 0.057 & 0.162 & 0.802 & 3.1 \\
500 & Proposed Method & 0.051 & 0.161 & 0.812 & 23.9 \\
1000 & Union Rule & 0.088 & 0.182 & 0.642 & 50.0 \\
1000 & Frequency threshold (50\%) & 0.096 & 0.176 & 0.636 & 7.7 \\
1000 & Frequency threshold (75\%) & 0.094 & 0.182 & 0.627 & 3.3 \\
1000 & Proposed Method & 0.096 & 0.181 & 0.636 & 25.7 \\
\bottomrule
\end{tabular}
\end{table}

\begin{table}[t]
\caption{Fixed-Budget Treatment Effect Estimation by Missing rate}
\label{tab:treatment_fixed_by_miss_rate}
\centering
\begin{tabular}{llcccc}
\toprule
Missing rate & Method & Bias & RMSE & Coverage & Model Size \\
\midrule
20\% & Union Rule & -0.002 & 0.230 & 0.860 & 50.0 \\
20\% & Frequency threshold (50\%) & 0.037 & 0.227 & 0.828 & 6.3 \\
20\% & Frequency threshold (75\%) & 0.071 & 0.269 & 0.794 & 2.3 \\
20\% & Proposed Method & 0.017 & 0.223 & 0.842 & 20.4 \\
40\% & Union Rule & 0.033 & 0.187 & 0.851 & 50.0 \\
40\% & Frequency threshold (50\%) & 0.043 & 0.178 & 0.828 & 6.8 \\
40\% & Frequency threshold (75\%) & 0.061 & 0.189 & 0.809 & 2.6 \\
40\% & Proposed Method & 0.043 & 0.181 & 0.835 & 21.3 \\
60\% & Union Rule & 0.006 & 0.159 & 0.884 & 50.0 \\
60\% & Frequency threshold (50\%) & 0.012 & 0.145 & 0.872 & 6.7 \\
60\% & Frequency threshold (75\%) & 0.032 & 0.149 & 0.860 & 2.7 \\
60\% & Proposed Method & 0.012 & 0.150 & 0.871 & 21.6 \\
\bottomrule
\end{tabular}
\end{table}

\begin{table}[t]
\caption{Fixed-Budget Treatment Effect Estimation by Missingness}
\label{tab:treatment_fixed_by_miss_type}
\centering
\begin{tabular}{llcccc}
\toprule
Missingness & Method & Bias & RMSE & Coverage & Model Size \\
\midrule
MCAR & Union Rule & -0.022 & 0.136 & 0.945 & 50.0 \\
MCAR & Frequency threshold (50\%) & -0.022 & 0.125 & 0.928 & 6.5 \\
MCAR & Frequency threshold (75\%) & -0.004 & 0.126 & 0.922 & 2.8 \\
MCAR & Proposed Method & -0.020 & 0.127 & 0.936 & 19.8 \\
MAR & Union Rule & -0.026 & 0.139 & 0.949 & 50.0 \\
MAR & Frequency threshold (50\%) & -0.022 & 0.128 & 0.931 & 6.5 \\
MAR & Frequency threshold (75\%) & -0.003 & 0.129 & 0.922 & 2.7 \\
MAR & Proposed Method & -0.021 & 0.128 & 0.939 & 19.7 \\
MNAR & Union Rule & 0.054 & 0.239 & 0.773 & 50.0 \\
MNAR & Frequency threshold (50\%) & 0.089 & 0.232 & 0.747 & 6.7 \\
MNAR & Frequency threshold (75\%) & 0.119 & 0.266 & 0.710 & 2.4 \\
MNAR & Proposed Method & 0.073 & 0.233 & 0.750 & 22.7 \\
\bottomrule
\end{tabular}
\end{table}

\begin{table}[t]
\caption{Fixed-Budget Treatment Effect Estimation by $R^2$}
\label{tab:treatment_fixed_by_target.R2}
\centering
\begin{tabular}{llcccc}
\toprule
$R^2$ & Method & Bias & RMSE & Coverage & Model Size \\
\midrule
R² = 0.2 & Union Rule & -0.072 & 0.174 & 0.928 & 50.0 \\
R² = 0.2 & Frequency threshold (50\%) & -0.044 & 0.161 & 0.909 & 6.9 \\
R² = 0.2 & Frequency threshold (75\%) & -0.022 & 0.167 & 0.903 & 2.3 \\
R² = 0.2 & Proposed Method & -0.057 & 0.164 & 0.917 & 21.8 \\
R² = 0.6 & Union Rule & 0.098 & 0.206 & 0.804 & 50.0 \\
R² = 0.6 & Frequency threshold (50\%) & 0.104 & 0.203 & 0.781 & 6.3 \\
R² = 0.6 & Frequency threshold (75\%) & 0.129 & 0.235 & 0.744 & 2.9 \\
R² = 0.6 & Proposed Method & 0.104 & 0.201 & 0.784 & 20.4 \\
\bottomrule
\end{tabular}
\end{table}

\begin{table}[t]
\caption{Fixed-Budget Treatment Effect Estimation by Error structure}
\label{tab:treatment_fixed_by_heteroscedastic}
\centering
\begin{tabular}{llcccc}
\toprule
Error structure & Method & Bias & RMSE & Coverage & Model Size \\
\midrule
Homoscedastic & Union Rule & 0.017 & 0.184 & 0.851 & 50.0 \\
Homoscedastic & Frequency threshold (50\%) & 0.032 & 0.175 & 0.834 & 6.7 \\
Homoscedastic & Frequency threshold (75\%) & 0.052 & 0.191 & 0.816 & 2.7 \\
Homoscedastic & Proposed Method & 0.027 & 0.177 & 0.836 & 21.5 \\
Heteroscedastic & Union Rule & 0.005 & 0.199 & 0.887 & 50.0 \\
Heteroscedastic & Frequency threshold (50\%) & 0.026 & 0.192 & 0.860 & 6.5 \\
Heteroscedastic & Frequency threshold (75\%) & 0.053 & 0.218 & 0.836 & 2.4 \\
Heteroscedastic & Proposed Method & 0.018 & 0.191 & 0.870 & 20.6 \\
\bottomrule
\end{tabular}
\end{table}

\begin{table}[t]
\caption{Fixed-Budget Treatment Effect Estimation by Design}
\label{tab:treatment_fixed_by_design}
\centering
\begin{tabular}{llcccc}
\toprule
Design & Method & Bias & RMSE & Coverage & Model Size \\
\midrule
4 & Union Rule & 0.004 & 0.192 & 0.880 & 50.0 \\
4 & Frequency threshold (50\%) & 0.022 & 0.184 & 0.856 & 6.5 \\
4 & Frequency threshold (75\%) & 0.049 & 0.207 & 0.833 & 2.5 \\
4 & Proposed Method & 0.015 & 0.184 & 0.865 & 20.6 \\
5 & Union Rule & 0.063 & 0.182 & 0.782 & 50.0 \\
5 & Frequency threshold (50\%) & 0.073 & 0.175 & 0.779 & 7.5 \\
5 & Frequency threshold (75\%) & 0.075 & 0.179 & 0.769 & 3.2 \\
5 & Proposed Method & 0.072 & 0.180 & 0.766 & 24.2 \\
\bottomrule
\end{tabular}
\end{table}

\FloatBarrier

\subsection{Matched-Budget Treatment Effect Estimation}

Matched-budget treatment-effect results are reported below. As in the main text, the qualitative comparison across methods remains largely unchanged once benchmark methods are evaluated at the same effective iteration budget as the proposed method.

\begin{table}[t]
\caption{Matched-Budget Treatment Effect Estimation by $n$}
\label{tab:treatment_matched_by_nn}
\centering
\begin{tabular}{llcccc}
\toprule
$n$ & Method & Bias & RMSE & Coverage & Model Size \\
\midrule
100 & Union Rule (matched) & -0.042 & 0.218 & 0.989 & 50.0 \\
100 & Frequency threshold (50\%, matched) & -0.017 & 0.207 & 0.946 & 5.5 \\
100 & Frequency threshold (75\%, matched) & 0.034 & 0.243 & 0.910 & 1.8 \\
100 & Proposed Method & -0.029 & 0.205 & 0.962 & 16.7 \\
500 & Union Rule (matched) & 0.042 & 0.164 & 0.818 & 50.0 \\
500 & Frequency threshold (50\%, matched) & 0.054 & 0.157 & 0.814 & 7.9 \\
500 & Frequency threshold (75\%, matched) & 0.058 & 0.162 & 0.805 & 3.2 \\
500 & Proposed Method & 0.051 & 0.161 & 0.812 & 23.9 \\
1000 & Union Rule (matched) & 0.087 & 0.181 & 0.646 & 50.0 \\
1000 & Frequency threshold (50\%, matched) & 0.097 & 0.177 & 0.636 & 8.2 \\
1000 & Frequency threshold (75\%, matched) & 0.095 & 0.182 & 0.627 & 3.4 \\
1000 & Proposed Method & 0.096 & 0.181 & 0.636 & 25.7 \\
\bottomrule
\end{tabular}
\end{table}

\begin{table}[t]
\caption{Matched-Budget Treatment Effect Estimation by Missing rate}
\label{tab:treatment_matched_by_miss_rate}
\centering
\begin{tabular}{llcccc}
\toprule
Missing rate & Method & Bias & RMSE & Coverage & Model Size \\
\midrule
20\% & Union Rule (matched) & -0.002 & 0.231 & 0.860 & 50.0 \\
20\% & Frequency threshold (50\%, matched) & 0.038 & 0.227 & 0.830 & 6.6 \\
20\% & Frequency threshold (75\%, matched) & 0.071 & 0.267 & 0.798 & 2.4 \\
20\% & Proposed Method & 0.017 & 0.223 & 0.842 & 20.4 \\
40\% & Union Rule (matched) & 0.034 & 0.187 & 0.852 & 50.0 \\
40\% & Frequency threshold (50\%, matched) & 0.044 & 0.178 & 0.828 & 7.2 \\
40\% & Frequency threshold (75\%, matched) & 0.062 & 0.189 & 0.809 & 2.7 \\
40\% & Proposed Method & 0.043 & 0.181 & 0.835 & 21.3 \\
60\% & Union Rule (matched) & 0.006 & 0.159 & 0.883 & 50.0 \\
60\% & Frequency threshold (50\%, matched) & 0.013 & 0.145 & 0.871 & 7.0 \\
60\% & Frequency threshold (75\%, matched) & 0.032 & 0.150 & 0.859 & 2.8 \\
60\% & Proposed Method & 0.012 & 0.150 & 0.871 & 21.6 \\
\bottomrule
\end{tabular}
\end{table}

\begin{table}[t]
\caption{Matched-Budget Treatment Effect Estimation by Missingness}
\label{tab:treatment_matched_by_miss_type}
\centering
\begin{tabular}{llcccc}
\toprule
Missingness & Method & Bias & RMSE & Coverage & Model Size \\
\midrule
MCAR & Union Rule (matched) & -0.022 & 0.137 & 0.946 & 50.0 \\
MCAR & Frequency threshold (50\%, matched) & -0.021 & 0.125 & 0.929 & 6.8 \\
MCAR & Frequency threshold (75\%, matched) & -0.004 & 0.126 & 0.921 & 2.9 \\
MCAR & Proposed Method & -0.020 & 0.127 & 0.936 & 19.8 \\
MAR & Union Rule (matched) & -0.026 & 0.138 & 0.947 & 50.0 \\
MAR & Frequency threshold (50\%, matched) & -0.021 & 0.127 & 0.929 & 6.8 \\
MAR & Frequency threshold (75\%, matched) & -0.003 & 0.129 & 0.921 & 2.8 \\
MAR & Proposed Method & -0.021 & 0.128 & 0.939 & 19.7 \\
MNAR & Union Rule (matched) & 0.055 & 0.239 & 0.773 & 50.0 \\
MNAR & Frequency threshold (50\%, matched) & 0.090 & 0.232 & 0.748 & 7.1 \\
MNAR & Frequency threshold (75\%, matched) & 0.119 & 0.265 & 0.713 & 2.5 \\
MNAR & Proposed Method & 0.073 & 0.233 & 0.750 & 22.7 \\
\bottomrule
\end{tabular}
\end{table}

\begin{table}[t]
\caption{Matched-Budget Treatment Effect Estimation by $R^2$}
\label{tab:treatment_matched_by_target.R2}
\centering
\begin{tabular}{llcccc}
\toprule
$R^2$ & Method & Bias & RMSE & Coverage & Model Size \\
\midrule
R² = 0.2 & Union Rule (matched) & -0.072 & 0.174 & 0.928 & 50.0 \\
R² = 0.2 & Frequency threshold (50\%, matched) & -0.043 & 0.161 & 0.909 & 7.2 \\
R² = 0.2 & Frequency threshold (75\%, matched) & -0.022 & 0.167 & 0.903 & 2.4 \\
R² = 0.2 & Proposed Method & -0.057 & 0.164 & 0.917 & 21.8 \\
R² = 0.6 & Union Rule (matched) & 0.098 & 0.207 & 0.804 & 50.0 \\
R² = 0.6 & Frequency threshold (50\%, matched) & 0.105 & 0.203 & 0.781 & 6.6 \\
R² = 0.6 & Frequency threshold (75\%, matched) & 0.129 & 0.234 & 0.746 & 2.9 \\
R² = 0.6 & Proposed Method & 0.104 & 0.201 & 0.784 & 20.4 \\
\bottomrule
\end{tabular}
\end{table}

\begin{table}[t]
\caption{Matched-Budget Treatment Effect Estimation by Error structure}
\label{tab:treatment_matched_by_heteroscedastic}
\centering
\begin{tabular}{llcccc}
\toprule
Error structure & Method & Bias & RMSE & Coverage & Model Size \\
\midrule
Homoscedastic & Union Rule (matched) & 0.017 & 0.185 & 0.850 & 50.0 \\
Homoscedastic & Frequency threshold (50\%, matched) & 0.032 & 0.175 & 0.834 & 7.1 \\
Homoscedastic & Frequency threshold (75\%, matched) & 0.052 & 0.191 & 0.817 & 2.8 \\
Homoscedastic & Proposed Method & 0.027 & 0.177 & 0.836 & 21.5 \\
Heteroscedastic & Union Rule (matched) & 0.006 & 0.199 & 0.888 & 50.0 \\
Heteroscedastic & Frequency threshold (50\%, matched) & 0.027 & 0.192 & 0.859 & 6.8 \\
Heteroscedastic & Frequency threshold (75\%, matched) & 0.053 & 0.217 & 0.836 & 2.5 \\
Heteroscedastic & Proposed Method & 0.018 & 0.191 & 0.870 & 20.6 \\
\bottomrule
\end{tabular}
\end{table}

\begin{table}[t]
\caption{Matched-Budget Treatment Effect Estimation by Design}
\label{tab:treatment_matched_by_design}
\centering
\begin{tabular}{llcccc}
\toprule
Design & Method & Bias & RMSE & Coverage & Model Size \\
\midrule
4 & Union Rule (matched) & 0.004 & 0.192 & 0.881 & 50.0 \\
4 & Frequency threshold (50\%, matched) & 0.023 & 0.184 & 0.856 & 6.8 \\
4 & Frequency threshold (75\%, matched) & 0.049 & 0.207 & 0.834 & 2.5 \\
4 & Proposed Method & 0.015 & 0.184 & 0.865 & 20.6 \\
5 & Union Rule (matched) & 0.063 & 0.182 & 0.778 & 50.0 \\
5 & Frequency threshold (50\%, matched) & 0.073 & 0.175 & 0.781 & 8.0 \\
5 & Frequency threshold (75\%, matched) & 0.076 & 0.179 & 0.770 & 3.3 \\
5 & Proposed Method & 0.072 & 0.180 & 0.766 & 24.2 \\
\bottomrule
\end{tabular}
\end{table}

\FloatBarrier

\subsection{TPR--FPR Frontiers by Sample Size and Missingness Mechanism}

Figures~\ref{fig:frontier_fixed_by_n_misstype} and \ref{fig:frontier_matched_by_n_misstype} report TPR--FPR frontiers by sample size and missingness mechanism. These figures illustrate how the relative positioning of the methods varies across subgroups while preserving the main aggregate trade-off discussed in the text.

\begin{figure}[htbp]
\centering
\caption{TPR--FPR Frontier by Sample Size and Missingness Mechanism (Fixed-Budget)}
\begin{minipage}{0.9\linewidth}
\footnotesize
Each point represents the average TPR and FPR of a method within a
sample-size and missingness-mechanism cell. Rows correspond to
MCAR, MAR, and MNAR; columns to $n\in\{100,500,1000\}$. The cross
marks the ideal point at $(0,1)$.
\end{minipage}
\label{fig:frontier_fixed_by_n_misstype}
\includegraphics[width=\linewidth]{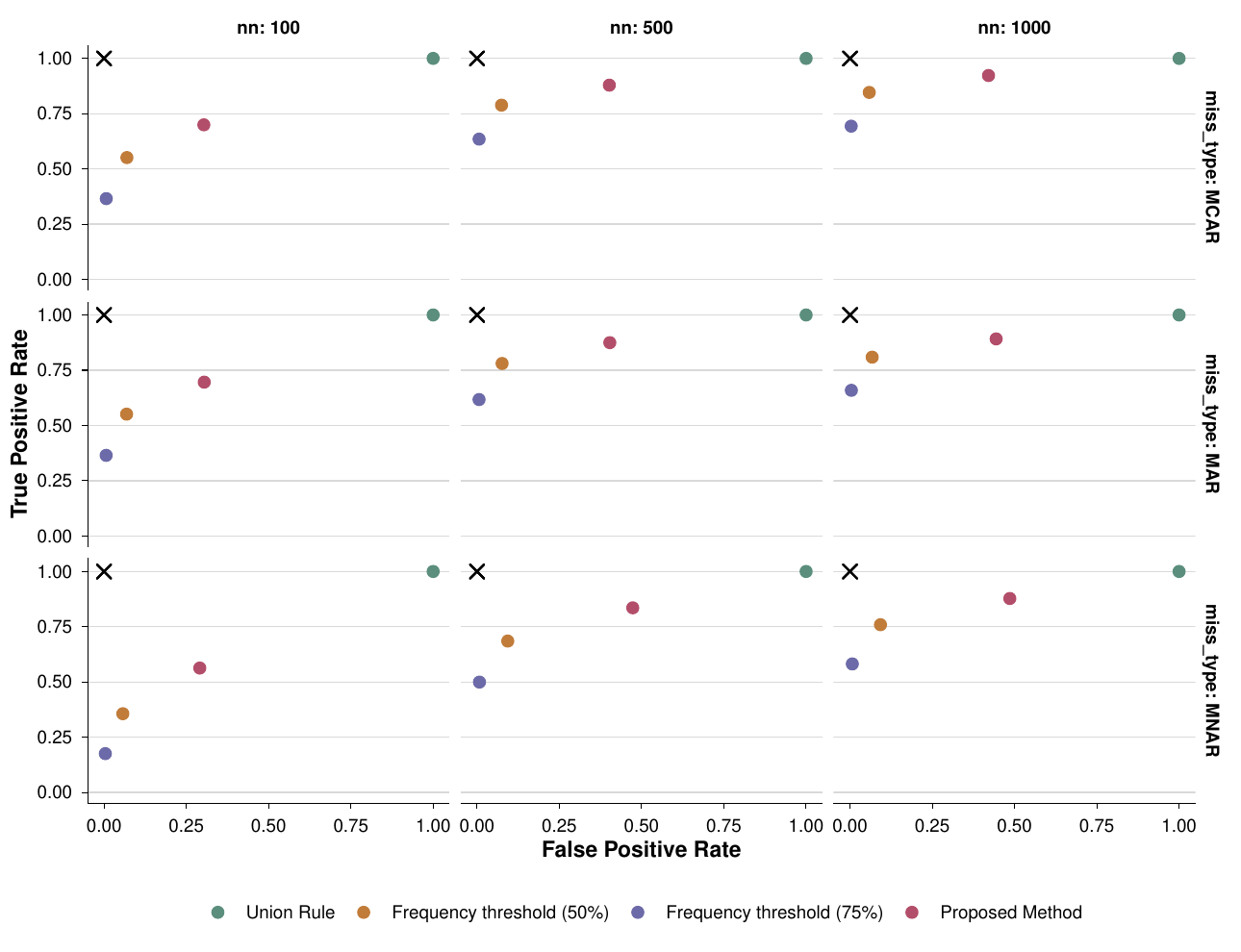}
\end{figure}

\begin{figure}[htbp]
\centering
\caption{TPR--FPR Frontier by Sample Size and Missingness Mechanism (Matched-Budget)}
\begin{minipage}{0.9\linewidth}
\footnotesize
As Figure~\ref{fig:frontier_fixed_by_n_misstype}, but under the
matched-budget comparison.
\end{minipage}
\label{fig:frontier_matched_by_n_misstype}
\includegraphics[width=\linewidth]{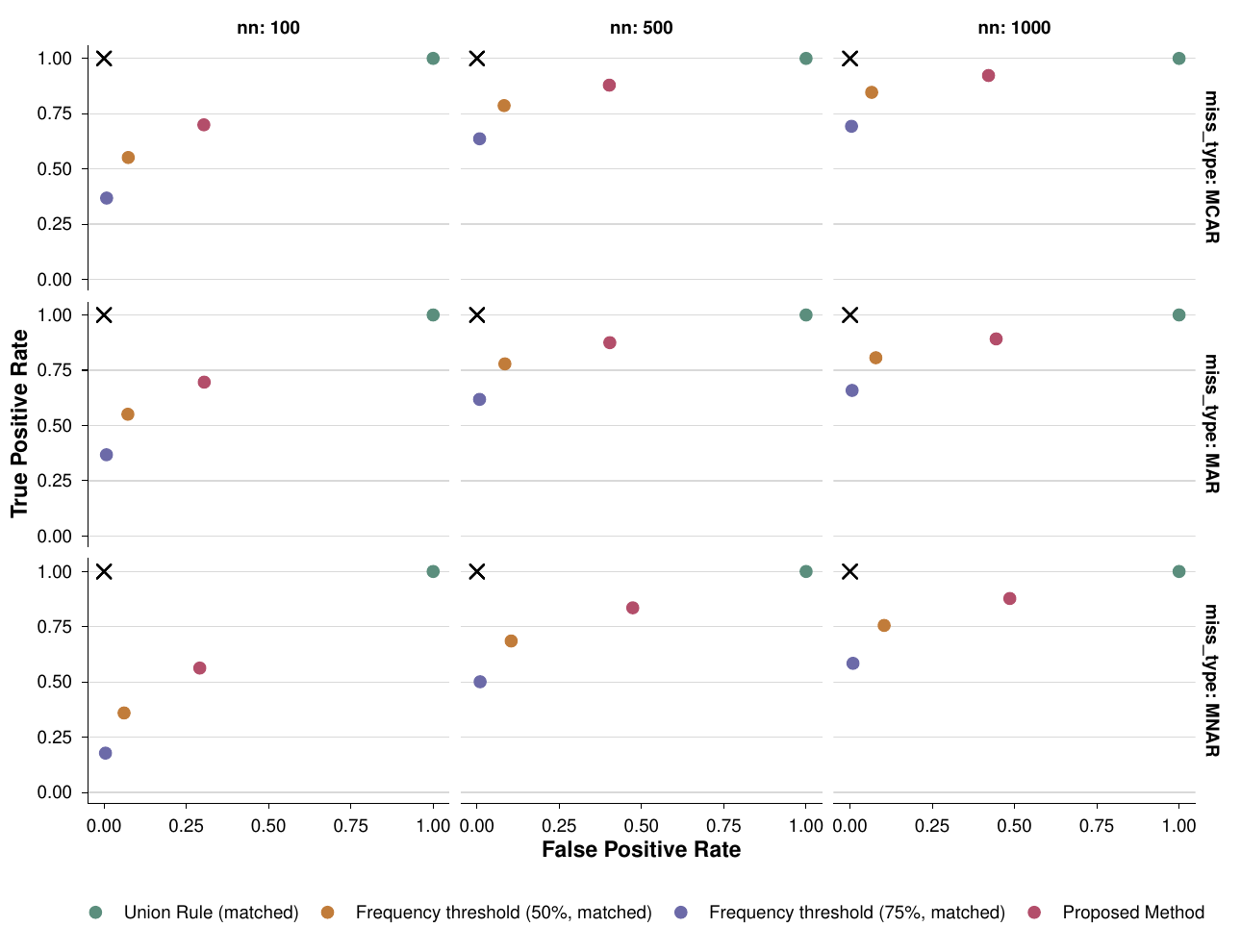}
\end{figure}

\FloatBarrier

\subsection{Distance to Ideal}

Figures~\ref{fig:distance_fixed_by_n}--\ref{fig:distance_matched_by_missrate} report the Euclidean distance to the ideal point $(0,1)$ by subgroup. As discussed in the main text, this metric is included as a descriptive diagnostic, but it should be interpreted cautiously because it weights false positives and false negatives symmetrically.

\begin{figure}[htbp]
\centering
\caption{Mean Distance to Ideal by Sample Size (Fixed-Budget)}
\begin{minipage}{0.9\linewidth}
\footnotesize
Euclidean distance to the ideal point $(0,1)$ by sample size
under the fixed-budget comparison, with $95\%$ confidence intervals.
\end{minipage}
\label{fig:distance_fixed_by_n}
\includegraphics[width=0.75\linewidth]{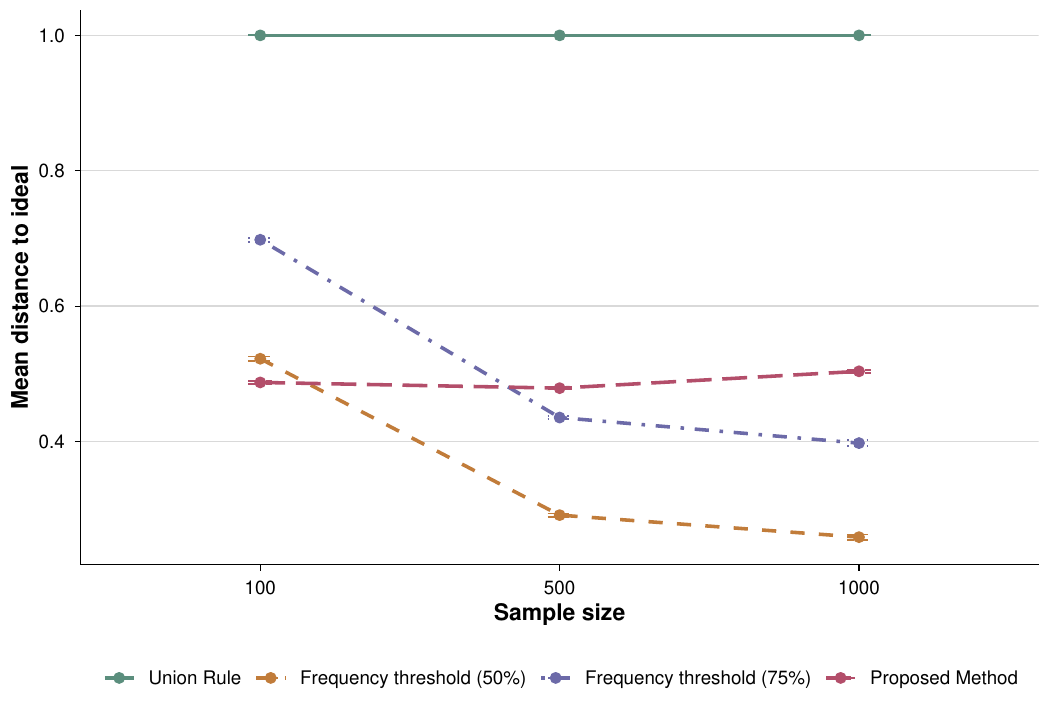}
\end{figure}

\begin{figure}[htbp]
\centering
\caption{Mean Distance to Ideal by Missingness Rate (Fixed-Budget)}
\begin{minipage}{0.9\linewidth}
\footnotesize
Euclidean distance to the ideal point $(0,1)$ by missingness rate
and mechanism under the fixed-budget comparison.
\end{minipage}
\label{fig:distance_fixed_by_missrate}
\includegraphics[width=\linewidth]{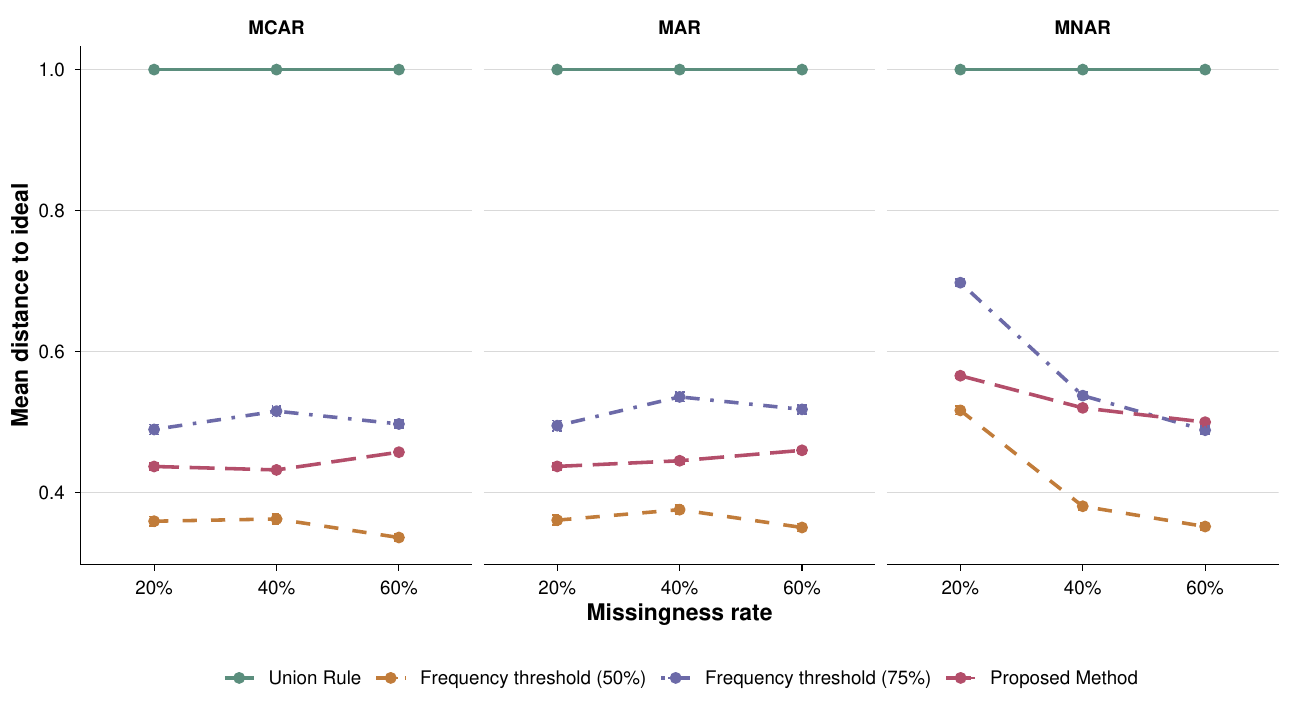}
\end{figure}

\begin{figure}[htbp]
\centering
\caption{Mean Distance to Ideal by Sample Size (Matched-Budget)}
\begin{minipage}{0.9\linewidth}
\footnotesize
As Figure~\ref{fig:distance_fixed_by_n}, but under the matched-budget comparison.
\end{minipage}
\label{fig:distance_matched_by_n}
\includegraphics[width=0.75\linewidth]{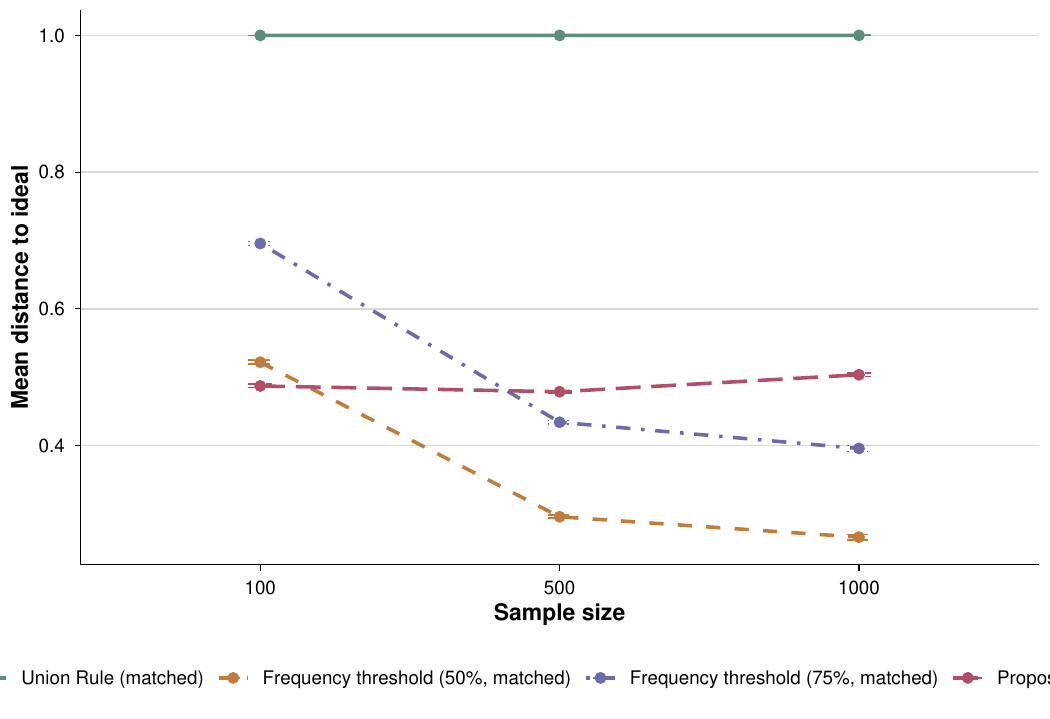}
\end{figure}

\begin{figure}[htbp]
\centering
\caption{Mean Distance to Ideal by Missingness Rate (Matched-Budget)}
\begin{minipage}{0.9\linewidth}
\footnotesize
As Figure~\ref{fig:distance_fixed_by_missrate}, but under the
matched-budget comparison.
\end{minipage}
\label{fig:distance_matched_by_missrate}
\includegraphics[width=\linewidth]{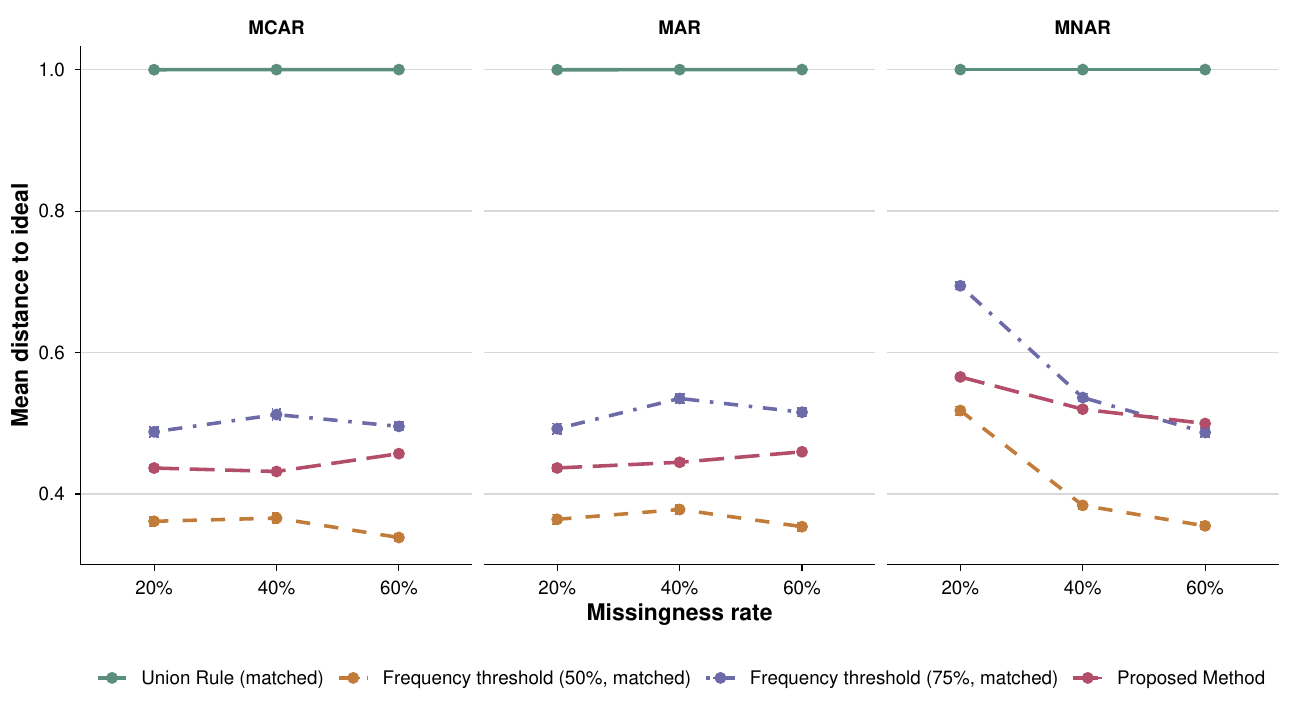}
\end{figure}

\section{Additional Empirical Details}
\label{app:empirical_details}

This appendix provides additional details for the empirical illustration in Section~\ref{sec:empirical}, including variable definitions, missingness rates, a fuller discussion of the estimated coefficient on the variable of interest, and a brief interpretation of the sensitivity analysis.

\subsection{Variable Definitions and Missingness}

Table~\ref{tab:empiricalvars} reports the candidate variables used in the empirical illustration, grouped by category, together with the corresponding ESS variable codes and the share of missing observations in the analysis sample.

\begin{table}[t]
\centering
\caption{Variable Definitions and Missingness}
\label{tab:empiricalvars}
\begin{minipage}{\linewidth}
\footnotesize
The table reports the candidate variables used in the empirical illustration, grouped by category, together with the corresponding ESS variable codes and the share of missing observations in the analysis sample ($n=8{,}543$). Outcome and variable-of-interest measures are fully observed by construction; missing covariate values are handled by MI within the BOOT-MI procedure.\\
\end{minipage}
\begin{tabular}{llr}
\toprule
Category & Variable & Missing (\%) \\
\midrule
Outcome & Employment indicator (pdwrk) & 0.0 \\
Variable of interest & Child aged $\leq 12$ in household (chldo12) & 0.0 \\
\addlinespace
Individual & Age (agea) & 0.0 \\
         & Years of education (eduyrs) & 1.1 \\
         & Education level (eisced) & 0.2 \\
         & Self-rated health (health) & 0.1 \\
\addlinespace
Household & Income decile (hinctnta) & 22.0 \\
         & Subjective financial situation (hincfel) & 1.0 \\
\addlinespace
Family   & Marital status (maritalb) & 0.7 \\
\addlinespace
Attitudes & Happiness (happy) & 0.2 \\
         & Life satisfaction (stflife) & 0.9 \\
         & Generalized trust (ppltrst) & 0.2 \\
         & Trust in parliament (trstprl) & 1.9 \\
         & Trust in legal system (trstlgl) & 1.2 \\
         & Trust in police (trstplc) & 0.5 \\
         & Political interest (polintr) & 0.2 \\
         & Unemployment experience (uempla) & 0.0 \\
\addlinespace
Living environment & Urbanicity (domicil) & 0.1 \\
         & Religious belonging (rlgblg) & 0.8 \\
\addlinespace
Structure & Country indicators (cntry) & 0.0 \\
\bottomrule
\end{tabular}
\end{table}

\subsection{Additional Discussion of the Estimated Coefficient}

The estimated coefficient on the variable of interest is close to zero under all aggregation rules. This pattern suggests that, in the present specification, the estimated conditional association is not materially affected by whether the selected model is relatively sparse or relatively dense.

A few features of the application help explain this result. First, the variable of interest captures the presence of children aged 0--12 in the household rather than focusing on preschool-age children only. Pooling over such a broad age range may attenuate the estimated association because childcare constraints weaken as children age. Second, some controls may partly lie on pathways related to labor supply decisions rather than functioning purely as background covariates. In particular, household income and subjective financial situation may absorb part of the conditional association. Third, the binary employment outcome does not capture intensive-margin adjustments such as reductions in working hours. Fourth, the pooled cross-country specification averages over institutional settings with potentially substantial heterogeneity in childcare provision and labor-market structure.

These considerations do not affect the methodological role of the empirical illustration. The purpose of the application is to compare aggregation rules in a realistic setting rather than to recover a causal parameter.

\subsection{Additional Discussion of the Sensitivity Analysis}

The sensitivity analysis in Table~\ref{tab:sensitivity_empirical} shows that the empirical results are fairly stable across a range of reasonable specifications. When the pilot-calibrated working probabilities are held fixed, varying the threshold from $c=\log(3)$ to $c=\log(30)$ changes the selected model size only modestly, from 34 to 37 variables, while the number of iterations increases from 6 to 12 as the evidence requirement becomes more stringent.

Permutation calibration produces a substantially more conservative model with 27 selected variables. This is consistent with the much larger null detection probability under that calibration, which makes positive detections less informative relative to the working null. Even so, the core set of selected variables remains broadly stable across specifications, supporting the conclusion that the empirical illustration is not driven by a knife-edge calibration choice.

\end{document}